\newtheorem{lemma}{Lemma}
\newtheorem{theorem}{Theorem}
 \definecolor{BLACK}{gray}{0}
 \definecolor{WHITE}{gray}{1}
 \definecolor{RED}{rgb}{1,0,0}
 \definecolor{GREEN}{rgb}{0,1,0}
 \definecolor{BLUE}{rgb}{0,0,1}
 \definecolor{CYAN}{cmyk}{1,0,0,0}
 \definecolor{MAGENTA}{cmyk}{0,1,0,0}
 \definecolor{YELLOW}{cmyk}{0,0,1,0}
\newcolumntype{C}[1]{>{\centering\arraybackslash$}p{#1}<{$}}
\begin{document}

\widetext
\title{Quantum State Preparation with Optimal Circuit Depth: Implementations and Applications}
\author{Xiao-Ming Zhang}
\affiliation{Center on Frontiers of Computing Studies, Peking University, Beijing 100871, China}
\affiliation{School of Computer Science, Peking University, Beijing 100871, China}

\author{Tongyang Li}
\affiliation{Center on Frontiers of Computing Studies, Peking University, Beijing 100871, China}
\affiliation{School of Computer Science, Peking University, Beijing 100871, China}

\author{Xiao Yuan}
\email{xiaoyuan@pku.edu.cn}
\affiliation{Center on Frontiers of Computing Studies, Peking University, Beijing 100871, China}
\affiliation{School of Computer Science, Peking University, Beijing 100871, China}

\begin{abstract}

Quantum state preparation is an important subroutine for quantum computing. We show that any $n$-qubit quantum state can be prepared with a $\Theta(n)$-depth circuit using only single- and two-qubit gates, although with a cost of an exponential amount of ancillary qubits. On the other hand, for sparse quantum states with $d\geqslant2$ nonzero entries, we can reduce the circuit depth to $\Theta(\log(nd))$ with $O(nd\log d)$ ancillary qubits. The algorithm for sparse states is exponentially faster than best-known results and the number of ancillary qubits is nearly optimal and only increases polynomially with the system size.  We discuss applications of the results in different quantum computing tasks, such as Hamiltonian simulation, solving linear systems of equations, and realizing quantum random access memories, and find cases with exponential reductions of the circuit depth for all these three tasks.
In particular, using our algorithm, we find a family of linear system solving problems enjoying exponential speedups, even compared to the best-known quantum and classical dequantization algorithms.

\end{abstract}
\maketitle

The speed limit of quantum state preparation  is a question with fundamental and practical interests, determining the efficiency of inputting classical data into a quantum computer, and playing as a critical subroutine for many quantum algorithms, such as in machine learning~\cite{Harrow.09,Lloyd.14,Kerenidis.16} and Hamiltonian simulations~\cite{Childs.18,Low.19}.
Without ancillary qubits, an exponential circuit depth is inevitable to prepare an arbitrary quantum state~\cite{Barenco.95,Knill.95,Vartiainen.04,Shende.04,Shende.06,Long.01,Grover.02,Mottonen.05,Bergholm.05,Plesch.11,Iten.16} and the optimal result $\Theta(2^n/n)$ was recently obtained by~\textcite{Sun.21_new}. Leveraging ancillary qubits, the circuit depth could be reduced to be subexponential scaling~\cite{Low.18,Zhang.21,Zhicheng.21,Sun.21_new,Rosenthal.21,Yuan.22,Clader.22}, yet in the worse case with an exponential number of ancillas. Very recently, the optimal circuit depth $\Theta(n)$ was achieved by~\cite{Sun.21_new, Rosenthal.21} with $O(2^n)$~\cite{Sun.21_new} and  $\tilde{O}(2^{n})$~\cite{Rosenthal.21} ancillary qubits.

Despite the previous results in minimizing circuit depth, the subexponential circuit depth is only achieved at the cost of exponential space complexity. 
Moreover, when considering applications in the field of quantum machine learning, strong data structure assumptions leave space for quantum-inspired classical algorithms. With a classical data structure enabling $l^2$ sampling, there are classical algorithms with polylogarithmic runtime dequantizing the quantum algorithms for recommendation systems~\cite{Tang.19}, solving linear systems~\cite{Chia.20,Gilyen.20}, semidefinite programs~\cite{Chia.20b}, etc.
These results show that space resources should not be neglected when discussing the quantum exponential advantages. 

In practice, the data may behave with a certain structure. Indeed, if the one imposes certain restrictions on the target quantum states, the circuit depth and the ancillary qubit number might be further reduced~\cite{Malvetti.21,Veras.21,Gleinig.21,Veras.20,Gabriel.21,Araujo.21,Rattew.22}. A typical scenario that has both theoretical and practical relevance is the sparse data structure, such as sparse classical data, Hamiltonians of physics systems, etc.
Using a constant number of ancillary qubits, arbitrary $d$-sparse quantum states (with $d$ nonzero entries)
can be prepared using a circuit depth of $O(dn)$
~\cite{Malvetti.21,Veras.21,Gleinig.21,Veras.20}. However, it was unclear if the sparse preparation procedure could be further sped up with more, but polynomial, ancillary qubits.   The fundamental speed limit of sparse state preparation is still an open question, which is important for studying the ultimate power of quantum machine learning algorithms.

In this work, we study the speed limit of quantum state preparation. 
We first develop a deterministic algorithm (independent of Refs.~\cite{Sun.21_new,Rosenthal.21}) for preparing an arbitrary quantum state with optimal circuit depth $\Theta(n)$ and $O(2^n)$ ancillary qubits.
The scheme requires a much more sparse connectivity than Ref.~\cite{Sun.21_new}, as each qubit connects to a constant number of other qubits.
We next introduce an algorithm for $d$-sparse quantum states ($d\geqslant 2$) that  achieves the optimal circuit depth $\Theta(\log(nd))$, exponentially faster than the best-known results~\cite{Malvetti.21,Veras.21,Gleinig.21}. The sparse state preparation requires $O(nd\log d)$ ancillary qubits, which is also nearly optimal.
Based on the results, we find a family of linear system tasks that can be solved with the circuit depth and the number of ancillary qubits being $O(\text{poly}(n))$, and hence show an exponential improvement compared to the best known quantum and classical dequantization algorithms.
We also show how our techniques can be applied to improving Hamiltonian simulations and quantum random access memories (QRAMs).

\begin{figure}
	\centering
	\includegraphics[width=1\columnwidth]{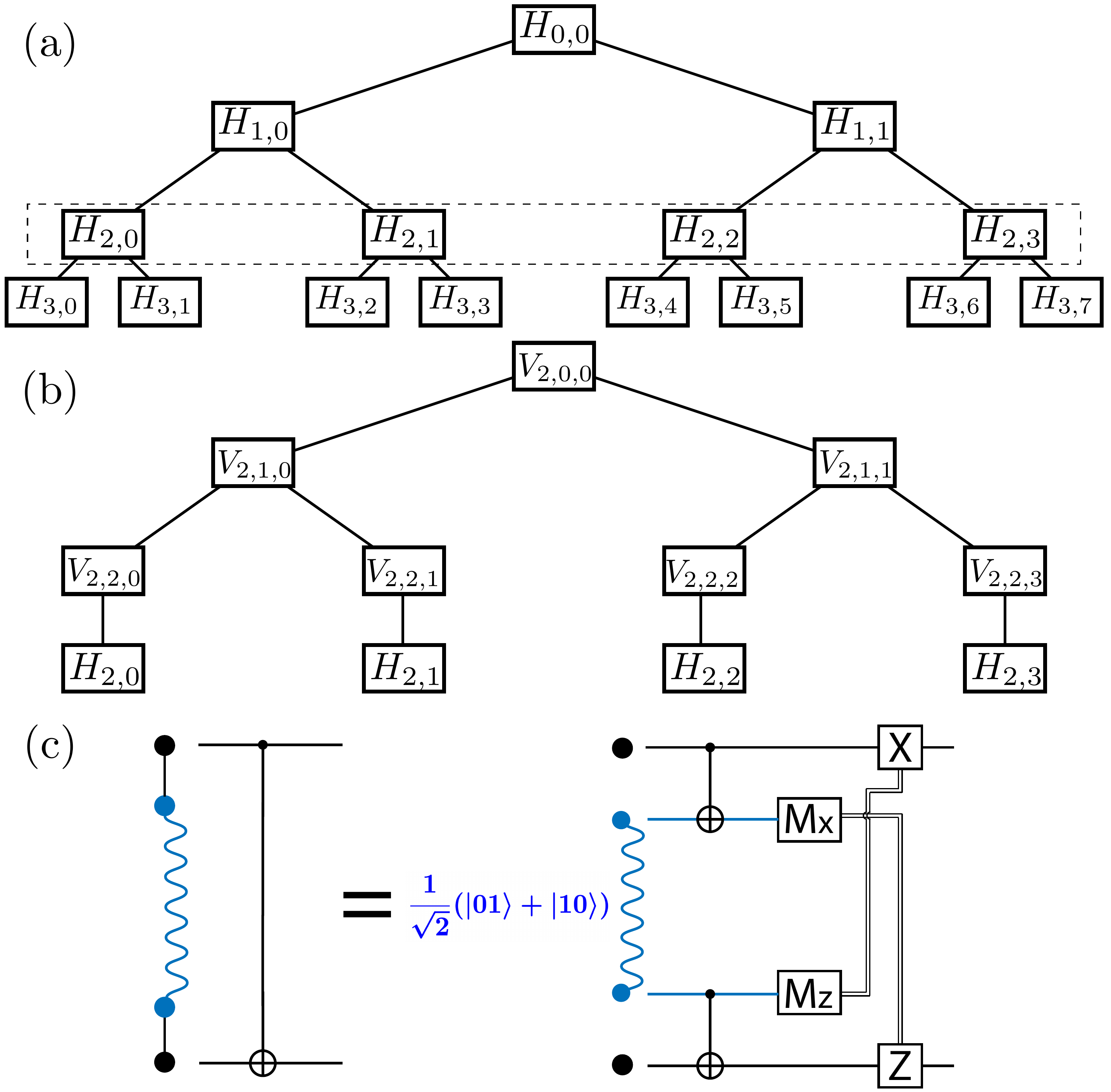}
	\caption{(a) Layout of binary tree $H$. Each block represents a qubit. (b) Layout of binary tree $V_2$, which connects to the second layer of $H$ with dashed box, i.e. $H_2$ . Here, $V_{2,\text{root}}$ is $V_{2,2,0}$. In (a) and (b), CNOT gates are only applied at qubit pairs connected by solid lines. (c) CNOT gate between two distant qubits (black circles) based on preshared Bell states (blue circles). $M_{x,z}$ and  $X$, $Z$  represent measurements and Pauli gates~\cite{Chou.18}. 
	} \label{fig:ab}
\end{figure}

\textbf{\textit{Access model.---}}A general $n$-qubit state can be  expressed as
\begin{align}\label{eq:ab}
|\psi\rangle=\sum_{k=0}^{N-1}a_k|k\rangle,
\end{align}
with $N=2^n$, $a_k\in\mathbb{C}$,  $\sum_{k=0}^{N-1} |a_k|^2=1$, and $|k\rangle\equiv|k_nk_{n-1}\cdots k_1\rangle$ being the  basis with bits $k_j$ for $j=1,2,\dots,n$. Before discussing our state preparation protocol, we first introduce how our quantum circuit accesses the classical description of a target quantum state.
Let $b_{n,k}\equiv|a_{k}|$, $b_{l,j}\equiv\sqrt{|b_{l+1,2j}|^2+|b_{l+1,2j+1}|^2}$ for $0\leqslant l\leqslant n-1$, $\theta_{l,j}= \arccos (b_{l,2j}/b_{l-1,j})$ for $b_{l-1,j}\neq0$, and $\theta_{l,j}=0$ for $b_{l-1,j}=0$. We require classical preprocessing to calculate $\theta_{l,j}$, and $\text{arg}(a_k)$. Here, $b_{l,j}$ are recursively defined so that we can encode the amplitudes in a treelike fashion allowing parallelization. This recursive definition is not required for phase arg$(a_k)$, because after encoding the amplitude, the phase can be encoded with a single layer of phase gates (see Sec.~I of~\cite{sm} for details). 
The preprocessing takes time $O(N)$ by sequential calculations, or $O(\log N)$ by parallel calculations with $O(N)$ space complexity.
These complexities are optimal because reading and writing $N$ values already require $\Omega(N)$ resource. \\
\indent For sparse quantum state with $d$ nonzero elements, the quantum state can be expressed as
\begin{align}\label{eq:sp}
|\psi\rangle=\sum_{k=0}^{d-1}\psi_k|q_k\rangle,
\end{align}
where $\psi_k\in \mathbb{C}$ and $q_k$ is the index (with $n$ digits) of the $k$th nonzero entries.
We assume $d=2^{\tilde n}$ with integer $\tilde n$, which can be always satisfied by appending $|q_k\rangle$ with zero amplitude.
Similarly, we let $b'_{n,k}\equiv|\psi_{k}|$, $b'_{l,j}\equiv\sqrt{|b'_{l+1,2j}|^2+|b'_{l+1,2j+1}|^2}$ for $0\leqslant l\leqslant \tilde n-1$, and $\theta'_{l,j}\equiv \arccos (b'_{l,2j}/b'_{l-1,j})$, and require classical preprocessing to calculate $\theta'_{l,j}$, $\text{arg}(\psi_k)$. The value of $q_{k}$ should also be encoded to the circuit. The preprocessing time is $O(nd)$ for sequential calculation, or $O(\log(nd))$ for parallel calculation with $O(nd)$ space complexity.
\\
\indent The calculated angles and the labels of nonzero basis $|q_k\rangle$ for sparse states can then be directly mapped to the parameters of the quantum circuit, so the time complexity for generating quantum circuits are identical to the preprocessing time. Note that the preprocessing only needs to be performed once for preparing arbitrary copies of state. Here and after, we assume that the classical preprocessing has been completed.

\textbf{\textit{Quantum state preparation.---}}
Without loss of generality, the task of quantum state preparation is to prepare $|\psi\rangle$ from an initial product state $|0\rangle^{\otimes n}$ using single- and two-qubit gates. The qubit layout of our protocol is illustrated in Figs.~\ref{fig:ab}(a) and~\ref{fig:ab}(b). There is an ($n+1$)-layer binary tree of qubits, $H$. The $l$th layer of $H$ is denoted as $H_{l}$, and its $j$th qubit is denoted as $H_{l,j}$. The $l$th layer of $H$ is connected to the leaf layer of another binary tree $V_l$ with $(l+1)$ layers. In this layout, each qubit connects to at most $3$ of the other qubits, while Ref.~\cite{Sun.21_new} assumes that two-qubit gates can be applied on any two qubits. With the qubit layout above and the access model introduced previously, we have the following result.

\begin{theorem}[Arbitrary quantum state preparation]\label{th:ab}
With only single- and two-qubit gates, an arbitrary $n$-qubit quantum state can be deterministically prepared with a circuit depth $\Theta(n)$ and $O(N)$ ancillary qubits.
\end{theorem}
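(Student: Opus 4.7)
\emph{Proof plan.---}
The lower bound $\Omega(n)$ follows from the causal-cone / descriptive-complexity arguments standard in the literature (cf.~\cite{Sun.21_new}), so the work lies in the constructive upper bound. The plan is to encode $|\psi\rangle$ in a one-hot superposition built up layer-by-layer along the tree $H$, then to fold that one-hot representation down onto the $n$-qubit output register via the tree $V_n$, using the auxiliary trees $V_l$ for $l<n$ to pre-distribute the long-range Bell pairs that keep each level of the construction constant depth.

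The first step is the amplitude encoding. Flip the root $H_{0,0}$ to $|1\rangle$ with an $X$ gate and, for $l=0,\ldots,n-1$, apply on each triple $(H_{l,j},H_{l+1,2j},H_{l+1,2j+1})$ the local three-qubit isometry that, controlled on $H_{l,j}=|1\rangle$, prepares $\cos\theta_{l+1,j}|10\rangle+\sin\theta_{l+1,j}|01\rangle$ on the two children. By induction on $l$, using the recursive definitions of $b_{l,j}$ and $\theta_{l,j}$, the joint state across $H_l$ becomes $\sum_{j=0}^{2^l-1} b_{l,j}|e_j\rangle$ with $|e_j\rangle$ the canonical one-hot vector; each isometry decomposes into $O(1)$ two-qubit gates on adjacent nodes of $H$, so if every level can be made constant depth the encoding runs in $O(n)$ depth overall. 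A single parallel layer of $R_z(\arg a_k)$ on the leaves $H_{n,k}$ then installs the phases. The second step is to translate $\sum_k a_k|e_k\rangle$ on $H_n$ into $\sum_k a_k|k\rangle$ on an $n$-qubit register at the root of $V_n$: each leaf $H_{n,k}$ carries the hard-wired binary label $k$, and propagating these bits upward through $V_n$ via CNOTs at each internal node XORs the label of the unique active branch onto the root in depth $O(n)$. Accounting for resources, $|H|=O(2^n)$ and $\sum_{l}|V_l|=O(2^n)$, so the total ancilla count is $O(N)$; each qubit has degree at most three by the tree layouts; and the overall depth is $O(n)$, matching the lower bound.

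The main obstacle will be keeping each level of the encoding truly $O(1)$-depth despite the exponentially increasing width of $H_l$. A naive intra-$H$ control distribution would cost depth $\Theta(l)$ at level $l$ and thus $O(n^2)$ overall. The fix is the Bell-state-mediated long-range CNOT of Fig.~\ref{fig:ab}(c): the trees $V_l$ are pre-loaded, in parallel and once at the start, with Bell pairs realizing the long-range connections required by each level, so that during the level-$l$ stage every nonlocal CNOT reduces to simultaneous single-qubit measurements plus local Pauli corrections. The delicate part of the correctness argument will be a book-keeping step tracking these measurement-induced Paulis and verifying that they can be commuted past all subsequent controlled rotations, or absorbed into the final phase layer, so that the circuit genuinely attains $\Theta(n)$ depth without hidden polylogarithmic overhead.
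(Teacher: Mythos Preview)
Your plan has the right skeleton for Stage~1 but misidentifies where the difficulty lies, and as a result there is a genuine gap.

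First, the ``obstacle'' you flag is not one. In the tree $H$, each parent $H_{l,j}$ is \emph{adjacent} to its two children, so the controlled isometry on every triple $(H_{l,j},H_{l+1,2j},H_{l+1,2j+1})$ is already a local, constant-depth operation; all $2^l$ triples at level $l$ are disjoint and run in parallel. No long-range CNOTs, no Bell-pair mediation, and no measurement-induced Pauli bookkeeping are needed for this stage. Your proposed use of the $V_l$ trees as Bell-pair reservoirs is a misreading of Fig.~\ref{fig:ab}(c), which concerns the \emph{physical} realization of a nonlocal gate on hardware, not the algorithmic structure.

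Second, the actual hard step---which your plan does not address---is uncomputation. After Stage~1 the state is not $\sum_k a_k|e_k\rangle$ supported on $H_n$ alone; it is $\sum_k a_k\bigotimes_{l=0}^n|(k,l)\rangle'_{H_l}$, with one activated qubit in \emph{every} layer of $H$ (your controlled isometry leaves the parent in $|1\rangle$). CNOT-copying the label of the active leaf onto an output register---even setting aside that in the paper's layout $V_{n,\text{root}}$ is a single qubit, not an $n$-qubit register---leaves all of $H$ entangled with the output. To finish deterministically you must reset every node of $H$ to $|0\rangle$, and this is precisely what the $V_l$ trees are for in the paper: for each $l$, a parallel-CNOT sweep up $V_l$ extracts the single bit $k_{n-l+1}$ (the parity of the active position in $H_l$) to $V_{l,\text{root}}$; a fanout down $V_l$ places a local copy of that bit next to every node of $H_l$; CCNOTs on (parent, local bit; child) then reset $H_l$ layer by layer; finally the fanout is undone. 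Each of these tree sweeps is $O(n)$ depth, and the output state lives on the $n$ single-qubit roots $\{V_{l,\text{root}}\}_{l=1}^n$. Without this uncomputation machinery your protocol outputs a state entangled with $O(N)$ garbage qubits rather than $|\psi\rangle$.
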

\noindent Our method saturates the  circuit depth lower bound $\Omega(n)$ ~\cite{Zhang.21,Sun.21_new}.
Below we sketch our protocol and refer to Sec.~I of~\cite{sm} for the formal description.

The root of $H$ is initialized as $|1\rangle$ and all other qubits are initialized as $|0\rangle$. The protocol contains $5$ stages. In stage $1$, with a $O(n)$ layer of CNOT and single qubit gates, $H$ is prepared as 
\begin{align}\label{eq:ab_2}
|\psi_{\text{stage 1}}\rangle=\sum_{k=0}^{2^{n}-1}a_{k}|1\rangle_{H_0}\bigotimes_{l=1}^{n}|(k,l)\rangle'_{H_l},
\end{align}
where $|1\rangle_{H_0}$ the state of $H_0$, and $|(k,l)\rangle'_{H_l}$ is the state of $H_l$. Here $(k,l)\equiv k_nk_{n-1}\cdots k_{n-l+1}$ represents the last $l$ digits of $k$, and $|(k,l)\rangle'\equiv|0\rangle^{\otimes (k,l)}|1\rangle|0\rangle^{\otimes 2^l-(k,l)-1}$. At each layer, there is only one qubit activated (at state $|1\rangle$) while the rest of the qubits are at state $|0\rangle$. The amplitude of the basis $|1\rangle_{H_0}\bigotimes_{l=1}^{n}|(k,l)\rangle'_{H_l}$ at Eq.~\eqref{eq:ab_2} is identical to the amplitude of $|k\rangle$ at Eq.~\eqref{eq:ab}. So the remaining task is all about basis transformation.

In stage $2$, for each $l$,  we map the state of $H_l$ to the state of the root of $V_{l}$. More specifically, we perform $|(k,l)\rangle'_{H_l}|0\rangle_{V_{l,\text{root}}}\rightarrow|(k,l)\rangle'_{H_l}|k_{n-l+1}\rangle_{V_{l,\text{root}}}$ where $V_{l,\text{root}}=V_{l,0,0}$. With a total circuit depth $O(n)$, we obtain
\begin{align}\label{eq:tree}
|\psi_{\text{stage 2}}\rangle=\sum_{k=0}^{2^{n}-1}a_{k}|1\rangle_{H_0}\bigotimes_{l=1}^{n}|(k,l)\rangle'_{H_l}|k_{n-l+1}\rangle_{V_{l,\text{root}}}.
\end{align}
In the remaining of the algorithm (stage $3$ to $5$), our goal is to uncompute $H$. This can be realized by flipping each qubit of $H_l$ conditioned on the states of its parent and $V_{l,\text{root}}$. By utilizing the binary trees $V_l$, the uncomputation can be done with $O(n)$ circuit depth~\cite{sm}.
We can then trace out all qubits except for the roots of $V_{l}$, and the state becomes $\sum_{k=0}^{2^n-1}\alpha_k\bigotimes_{l=1}^{n}|k_{n-l+1}\rangle_{V_{l,\text{root}}}$, which is equivalent to Eq.~\eqref{eq:ab}.

We also show in~\cite{sm} that our scheme can be approximated to accuracy $\varepsilon$ using Clifford+$T$ gates with depth $O(n\log(n/\varepsilon))$. This decomposition is important for fault-tolerant implementation based on surface code~\cite{Fowler.12}. 

\textbf{\textit{Nonlocal entangling gate implementation.---}} Similar to QRAMs, encoding exponential data to a quantum state may require spatially nonlocal gates. The nonlocal gate can be realized by quantum network with preshared Bell states~\cite{Gottesman.99,Chou.18}. As shown in Fig.~\ref{fig:ab}(c), the scheme requires a pair of ancillary qubits at Bell state $1/\sqrt{2}(|01\rangle+|10\rangle)$. Each ancillary qubit couples to either the control or target qubit. Effective CNOT can then be realized with local operations and classical communication. The protocol has been demonstrated in superconducting qubit~\cite{Chou.18} and trapped-ion systems~\cite{Wan.19}. Alternatively, nonlocal gates can also be realized with spin-photon network~\cite{Chen.21}. In fault-tolerant settings, surface code based on teleportated CNOT gate above has also been proposed in~\cite{Xu.22}, which can be straightforwardly applied to our scheme. In~\cite{sm}, we further show that our scheme can be implemented even in a nearest-neighbor coupled two-dimensional qubit array, only at the cost of a mild increase of the ancillary qubit number to $O(n^2N)$.

\textbf{\textit{Sparse quantum state preparation.---}}
The protocol can be further improved if the target states are sparse. We first introduce two subroutines that are useful for sparse state preparation and then discuss several other applications. 
Both subroutines work on a quantum system containing an \textit{index register} and a \textit{word register}, which are systems with certain number of qubits. 

The first subroutine is the product unitary memory (PUM), which can be considered as a generalization of QRAM protocol in~\cite{Hann.19,Hann.21}. We define an $n$-word product unitary function $\hat U(k)\equiv \bigotimes_{l=n}^{1} \hat U_l(k)$ with $\hat U_l(k)\in \text{SU}(2)$ and $k\in\{0,1\cdots,d-1\}$.
We define the selector unitary of $\hat U$ as select$(\hat U)$, which satisfies select$(\hat U)|k\rangle|z\rangle=|k\rangle \hat U(k)|z\rangle$. Here, $|k\rangle$ is the basis of the $\lceil\log_2 d\rceil$-qubits index register, and $|z\rangle$ is the basis of the $n$-qubits word register. The selector unitary can also be represented as $\text{select}(\hat U)\equiv\sum_{k=0}^{d-1}|k\rangle\langle k|\otimes \hat U(k)$. We have the following result (see Sec. III A of~\cite{sm} for details).

\begin{lemma}[PUM]\label{lm:1}
Given an arbitrary $\lceil\log_2d\rceil$-index, $n$-word product unitary function $\hat U(k)$, select$(\hat U)$ can be realized with circuit depth $O(\log (nd))$ and $O(nd)$ ancillary qubits using only single- and two-qubit gates.
\end{lemma}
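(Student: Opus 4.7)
The key structural fact is that $\hat U(k)=\bigotimes_{l=1}^{n}\hat U_l(k)$ factorizes into single-qubit operations on the word register. Hence, if the index $|k\rangle$ were available in $n$ coherent copies, one could apply the $n$ single-qubit controlled operations $\text{select}(\hat U_l)$ in parallel, each acting on its own copy of the index and on the $l$th word qubit. My plan is therefore threefold: (i) fan out the index register to $n$ coherent copies; (ii) run $n$ independent per-qubit selectors in parallel; (iii) uncompute the fanout.

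For step (i), starting with the $\lceil\log_2 d\rceil$-qubit index register, I would use a binary tree of CNOT gates to copy each computational-basis bit into $n$ fresh ancillas. This classical fanout produces $n$ perfectly correlated copies of $|k\rangle$ (so its action on $|k\rangle\langle k|$ is faithful), has depth $O(\log n)$, and consumes $O(n\log d)$ ancillas. Step (iii) is the mirror image and has the same cost.

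The nontrivial piece is step (ii): realizing a single per-word-qubit selector $\text{select}(\hat U_l)=\sum_{k}|k\rangle\langle k|\otimes \hat U_l(k)$ with depth $O(\log d)$ and $O(d)$ ancillas. I would construct it in the bucket-brigade style of Refs.~\cite{Hann.19,Hann.21}: first use a binary tree of quantum routers (one per internal node, $O(d)$ ancillas in total) whose states are set by the $\log d$ index bits in depth $O(\log d)$; then route the word qubit from the root to the unique leaf addressed by $k$ through those routers, apply the single-qubit gate $\hat U_l(k)$ at that leaf, and reverse the routing. Because the word register holds only one qubit per $l$, all $d$ potential leaf gates $\hat U_l(0),\dots,\hat U_l(d-1)$ can be preplaced at their respective leaves and triggered in a single constant-depth layer, so the whole selector has depth $O(\log d)$.

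Combining the three steps gives total depth $O(\log n+\log d)=O(\log(nd))$ and total ancillary count $O(n\log d)+n\cdot O(d)=O(nd)$, as claimed. The main obstacle is to verify that the bucket-brigade routing is fully coherent: one must check that after the reverse sweep every router and fanout ancilla factors out in $|0\rangle$ independently of $k$, so that the overall unitary is exactly $\sum_k|k\rangle\langle k|\otimes\bigotimes_l \hat U_l(k)$ without residual entanglement. This is a direct adaptation of the QRAM analysis in Refs.~\cite{Hann.19,Hann.21}, with arbitrary single-qubit unitaries replacing the bit-copy operation at the leaves.
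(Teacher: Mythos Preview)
Your plan is correct and coincides with the paper's construction: the paper also fans out the index into $n$ copies, runs $n$ independent single-word selectors in parallel, and uncomputes the fanout (Sec.~III~C of the supplement), with each single-word selector built exactly as you describe---a bucket-brigade router tree in the style of Refs.~\cite{Hann.19,Hann.21} with the leaf operation generalized from bit-copy to an arbitrary controlled-$\hat U_l(k)$. The one detail worth making explicit (and which the paper spells out) is that a separate \emph{pointer} qubit, initialized to $|1\rangle$, is routed alongside the word qubit so that the leaf unitary is applied as $\text{C-}\hat U_l(k)$ controlled on the pointer-holder; this is precisely the mechanism that resolves your ``main obstacle'' by ensuring unaddressed leaves remain in $|0\rangle$ and factor out cleanly.
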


The second subroutine is the sparse Boolean memory (SBM).
We consider an $n$-index, $\tilde{n}$-word Boolean function $f:\{0,1\}^n\rightarrow \{0,1\}^{\tilde{n}}$. Let $\mathcal{S}_{f}=\{k|f(k)\neq 0\cdots0\}$ containing all input indexes with nonzero output. We say that $f$ is $s$ sparse if $\mathcal{S}_f$ has no more than $s$ elements. Its corresponding sparse Boolean function selector satisfies select$(f)|k\rangle|z\rangle=|k\rangle|z\oplus f(k)\rangle$, where $\oplus$ represents bitwise XOR. Let $f_l(k)$ be the $l$th digit of $f(k)$, select$(f)$ can also be expressed as
\begin{align}\label{eq:selef}
\text{select}(f)\equiv\sum_{k=0}^{2^n-1}|k\rangle\langle k|\bigotimes_{l=\tilde{n}}^{1} \left(f_l(k)\hat X+\overline f_l(k)\hat{\mathbb{I}}_1\right),
\end{align}
 where $\overline f_l(k)$ is the NOT of $f_l(k)$, $\hat X$ is the Pauli-$X$ operator and $\hat{\mathbb{I}}_{m}$ represents the $m$-qubit identity. We have the following result (see Sec. III A of~\cite{sm} for details).

\begin{lemma}[SBM]\label{lm:2}
Given an arbitrary $n$-index, $\tilde n$ word, $s$-sparse Boolean function $f$, select$( f)$ in Eq.~\eqref{eq:selef} can be {realized with a quantum circuit with circuit depth} $O(\log (ns\tilde n))$ and $O(ns\tilde n)$ ancillary qubits using only single- and two-qubit gates.
\end{lemma}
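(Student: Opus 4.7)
The plan is to exploit the sparsity of $f$ by treating each of the $s$ indices in $\mathcal S_f = \{k^{(0)},\ldots,k^{(s-1)}\}$ as an independent branch that can be checked in parallel. I would decompose select$(f)$ into three logical stages: (i) coherently compare the index register $|k\rangle$ with each $k^{(i)}$, producing a flag qubit $c_i$ equal to $1$ iff $k=k^{(i)}$; (ii) use these flags to flip the appropriate qubits of the word register; and (iii) uncompute the flags. Distinctness of the $k^{(i)}$ guarantees that at most one $c_i$ is $1$ on any computational basis input, so the ``OR over $i$'' required in stage (ii) can be replaced by an XOR, which is amenable to a log-depth tree of CNOTs.

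Stage by stage, and using only single- and two-qubit gates, I would first fan out the $n$-qubit index register into $s$ coherent copies through a balanced CNOT tree; this has depth $O(\log s)$ and consumes $O(ns)$ ancillas. Next, in parallel over $i$, I would compute $c_i$ from the $i$th copy by applying the classically determined $X$-gate mask that sends $|k^{(i)}\rangle \mapsto |1\rangle^{\otimes n}$, followed by an $n$-input AND tree of Toffoli gates written into a fresh ancilla; this stage has depth $O(\log n)$ and uses $O(n)$ workspace qubits per branch, for $O(ns)$ extra ancillas. I would then fan out each flag $c_i$ into $\tilde n$ coherent copies in depth $O(\log \tilde n)$ using $O(s\tilde n)$ ancillas, and for each word qubit $l$ in parallel compute $d_l=\bigoplus_{i:\,f_l(k^{(i)})=1} c_i$ via a CNOT tree of depth $O(\log s)$. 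One layer of CNOTs from each $d_l$ into word qubit $l$ then realizes $|z\rangle \mapsto |z\oplus f(k)\rangle$, since at most one of the XOR'd flags can actually be set. Finally I would reverse each preceding CNOT/Toffoli tree to return all ancillas to $|0\rangle$ while leaving the word register modified.

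Summing the stage depths gives $O(\log s+\log n+\log\tilde n+\log s)=O(\log(ns\tilde n))$, and the total ancilla count is $O(ns+s\tilde n)=O(ns\tilde n)$, matching the stated bounds. The main obstacle is the scheduling and uncomputation bookkeeping: one must verify that the various fan-out, AND-tree and XOR-tree gadgets can be interleaved so that every wire copy needed by a later stage is actually ready in time, and that the uncomputation passes clear every ancilla without re-entangling the word register. The individual gadgets (log-depth quantum fan-out, log-depth AND/XOR reductions over a balanced binary tree of two-qubit gates) are standard constructions, so the real content of the proof is in proving correctness of the composition and in carrying out this parallel schedule cleanly.
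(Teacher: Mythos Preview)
Your proposal is correct and follows essentially the same architecture as the paper: fan out the index register into $s$ parallel copies, compute a flag $c_i=[k=k^{(i)}]$ in each branch via a balanced Toffoli/AND tree, exploit the fact that at most one flag is set so that OR collapses to XOR, feed the result into the word register through a CNOT tree, and uncompute everything by reversing the trees. The paper organizes this as Algorithms~4--6 on explicit binary trees $T^{\text{idx}}$, $T^{\text{mem}}$, $T^{\text{wrd}}$, but the logical content is the same.

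The one structural difference is in the multi-word extension. The paper handles $\tilde n>1$ by first fanning out the \emph{index register} into $\tilde n$ copies and running $\tilde n$ independent single-word SBMs in parallel, which is why it incurs the full $O(ns\tilde n)$ ancilla count. You instead compute the flags $c_i$ once and fan out the \emph{flags} into $\tilde n$ copies before the per-word XOR trees, giving $O(ns+s\tilde n)$ ancillas. Your route is thus slightly more frugal in space (while still within the stated bound), at the cost of needing a word-qubit-dependent wiring of the XOR trees; the paper's route is more modular since each word bit reuses an identical single-word gadget. Both achieve the claimed depth $O(\log(ns\tilde n))$.
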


Based on Lemma~\ref{lm:1},~\ref{lm:2}, and the access model discussed previously, we are now ready for our sparse state preparation protocol. Our result is as follows.

\begin{theorem}[Sparse state preparation]\label{th:sp}
With only single- and two-qubit gates, arbitrary $n$-qubit, $d$-sparse ($d\geqslant2$) quantum states can be deterministically prepared with a circuit depth $\Theta(\log(nd))$ and $O(nd\log d)$ ancillary qubits.
\end{theorem}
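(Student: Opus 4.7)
The plan is to decouple the ``amplitude'' information from the ``support'' information of $|\psi\rangle$ and treat each piece with a dedicated primitive. Concretely, I would first prepare a \emph{condensed} dense state $|\phi\rangle=\sum_{k=0}^{d-1}\psi_k|k\rangle$ on an auxiliary \emph{index register} $A$ of $\tilde n=\log_2 d$ qubits, then re-label the computational basis $|k\rangle\mapsto|q_k\rangle$ on a separate $n$-qubit \emph{word register} $B$ using the sparse Boolean memory of Lemma~\ref{lm:2}, and finally uncompute $A$.

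In more detail, \textbf{Step 1} uses Theorem~\ref{th:ab} on the $\tilde n$-qubit register $A$, fed with the angles $\theta'_{l,j}$ and phases $\arg(\psi_k)$ from the access model. Because $A$ has only $\tilde n=\log d$ qubits, this costs depth $\Theta(\log d)$ and $O(2^{\tilde n})=O(d)$ ancillas. \textbf{Step 2} introduces $B$ in the state $|0\rangle^{\otimes n}$ and applies $\mathrm{select}(f)$ for the Boolean function $f:\{0,1\}^{\tilde n}\to\{0,1\}^n$ defined by $f(k)=q_k$, extended by zero. By construction $f$ is $d$-sparse, so Lemma~\ref{lm:2} realizes $\mathrm{select}(f)$ in depth $O(\log(n\tilde n d))=O(\log(nd))$ using $O(n\tilde n d)=O(nd\log d)$ ancillas, yielding $\sum_{k}\psi_k|k\rangle_A|q_k\rangle_B$. \textbf{Step 3} disentangles $A$: since the $q_k$ are distinct, the partial inverse $g:\{0,1\}^n\to\{0,1\}^{\tilde n}$ with $g(q_k)=k$ is well-defined and $d$-sparse, so applying $\mathrm{select}(g)$ (now with $B$ as index and $A$ as word register) XORs $k$ back into $A$ and leaves $|0\rangle_A\otimes\sum_k\psi_k|q_k\rangle_B$, again at cost $O(\log(nd))$ depth and $O(nd\log d)$ ancillas by Lemma~\ref{lm:2}. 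Summing the three stages gives the claimed $O(\log(nd))$ depth and $O(nd\log d)$ ancillas.

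For the matching lower bound $\Omega(\log(nd))$, I would invoke a standard light-cone / entanglement-spreading argument: a circuit composed of two-qubit gates that acts on $|0\rangle^{\otimes(n+a)}$ with $a=\mathrm{poly}(n,d)$ ancillas has a depth of at least $\Omega(\log N_{\text{out}})$ to produce any state whose reduced density matrices are nontrivially correlated across the $n$ output qubits and whose Schmidt rank can reach $d$; this lower bound, already used implicitly in the existing $\Omega(n)$ bound for dense preparation, yields $\Omega(\log d+\log n)=\Omega(\log(nd))$ here.

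The main obstacle is the bookkeeping in Step 3, in particular arguing that the inverse relabeling $g$ can be computed cleanly as a $d$-sparse Boolean function with the same asymptotic cost as $f$; once this is in place, Lemmas~\ref{lm:1} and~\ref{lm:2} do the heavy lifting and the only remaining work is checking that ancilla registers from the three stages can be allocated in parallel without inflating the $O(nd\log d)$ count. A secondary subtlety is handling the complex phases in Step 1, which I would absorb into the dense preparation of $|\phi\rangle$ via Theorem~\ref{th:ab} applied to the single register $A$, rather than adding phases after the basis relabeling (where they would live on $B$ and be more expensive to implement).
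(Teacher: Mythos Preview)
Your upper-bound construction is essentially the paper's proof: the same three-step architecture (dense preparation of $\sum_k\psi_k|k\rangle$ on a $\tilde n$-qubit register $A$ via Theorem~\ref{th:ab}; relabel $|k\rangle\mapsto|q_k\rangle$ into $B$; uncompute $A$ via the $d$-sparse inverse map $g(q_k)=k$ using Lemma~\ref{lm:2}). The one variation is in Step~2: the paper casts the forward map as a product unitary $\hat U_{\text{prep}}(k)=\bigotimes_j(q_{k,j}\hat X+\bar q_{k,j}\hat{\mathbb I}_1)$ and applies Lemma~\ref{lm:1} (PUM), costing only $O(nd)$ ancillas there, whereas you use Lemma~\ref{lm:2} (SBM) at $O(nd\log d)$. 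This makes no asymptotic difference since Step~3 already dominates at $O(nd\log d)$, and your variant is valid.

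Your lower-bound sketch has a gap. The light-cone argument correctly gives $\Omega(\log n)$ (the paper uses the GHZ state for this), but the ``Schmidt rank can reach $d$'' remark does not yield $\Omega(\log d)$: with unbounded ancillas, even a constant-depth two-qubit circuit can produce output states of arbitrarily large Schmidt rank across any cut of the $n$ data qubits (just prepare Bell pairs between data qubits and ancillas). The paper instead uses a parameter-counting argument (a depth-$L$ circuit specifies the $n$-qubit marginal with $O(n\cdot 2^L)$ real parameters, while a generic $d$-sparse state needs $\Omega(d)$), which gives $\Omega(\log(d/n))$ and holds for \emph{arbitrary} numbers of ancillas, not just $a=\mathrm{poly}(n,d)$ as you wrote.
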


As we prove in Lemma~3 at Sec.~II of \cite{sm} that the circuit depth is lower bounded by $\Omega(\log(nd))$, our protocol also achieves the optimal circuit depth for sparse states.         
 
Below, we show how our protocol works for preparing $\sum_{k=0}^{d-1}\psi_k|q_k\rangle$ in Eq.~\eqref{eq:sp}.
We introduce registers $A$ and $B$, consisting of $\tilde n=\lceil \log_2d \rceil$ and $n$ qubits respectively. All qubits are initialized to $|0\rangle$. Then, we  prepare register $A$ to state $\sum_{k=0}^{d-1}\psi_k|k\rangle_A$, which uses $O(\log d)$ circuit depth and $O(d)$ ancillary qubits according to Theorem~\ref{th:ab}.
Next, we introduce an $n$-word product unitary function $\hat U_{\text{prep}}(k)=\bigotimes_{j=n}^{1}\left(q_{k,j}\hat{X}+\overline q_{k,j}\hat{\mathbb{I}}_1\right)$, where $q_{k,j}$ is the $j$th digit of $q_k$. We query select $\left(\hat U_{\text{prep}}\right)$ with register $A$ as index register and register $B$ as word register, and the output state is
$
\sum_{k=0}^{d-1}\psi_k|k\rangle_A|q_k\rangle_B$.
According to Lemma~\ref{lm:1}, this step can be realized with $O(\log(nd))$ circuit depth and $O(nd)$ ancillary qubits.
The remaining procedure is to uncompute register $A$. To do so, we introduce another $n$-index, $\tilde n$-word Boolean function $f_{\text{prep}}$. Let $\mathcal{Q}$ be a set containing all nonzero entries $q_k$ of the target state. The definition of $f_{\text{prep}}$ is that $f_{\text{prep}}(q_k)=k$ for $q_k\in\mathcal{Q}$ and $f_{\text{prep}}(q)=0$ for $q\notin\mathcal{Q}$ (i.e. $\mathcal{S}_{f_{\text{prep}}}=\mathcal{Q}$). We query select$\left(f_{\text{prep}}\right)$ with register $B$ as index register and register $A$ as word register, after which the state becomes
$\sum_{k=0}^{d-1}\psi_k|0\cdots0\rangle_A|q_k\rangle_B$.
The target state is obtained by tracing out register $A$. 

Because $f_{\text{prep}}$ is $d$ sparse, according to Lemma~\ref{lm:2}, this step has circuit depth $O(\log (nd))$ and space complexity $O(nd\log d)$. So the total circuit depth and space complexity of sparse state preparation is $\Theta(\log (nd))$ and $O(nd\log d)$. Moreover, as mentioned previously, it takes classical runtime $O(nd)$ to generate the quantum circuit, which can be reduced to $O(\log(nd))$ for parallel calculation with $O(nd)$ space complexity.

Theorem~\ref{th:sp} also provides a method for approximating nonsparse states with sparse states. We denote $a^{\text{max}}_j$ as the $j$th largest value of $|a_k|$. Suppose $\sum_{j=1}^{d}|a_{j}^{\max}|^2=1-\varepsilon$, we can then set all amplitudes $|a_k|<a^{\text{max}}_d$ to zero and normalize the sparse state. According to Theorem~\ref{th:sp}, the quantum state can be approximated to fidelity $F=1-\varepsilon$ with circuit depth  $O(\log(nd))$ (see Sec.~IX of~\cite{sm} for details).

We next discuss applications of our results.

\textbf{\textit{Hamiltonian simulation.---}}
A Hamiltonian $\hat{H}$ can generally be expressed as a linear combination of products of single qubit unitaries (such as Pauli strings)
\begin{align}\label{eq:Hpauli}
\hat{H}=\sum_{p=0}^{P-1}\alpha_p\hat{V}(p),
\end{align}
for some $\alpha_p> 0$, $\hat V(p)=\bigotimes_{l=1}^{n}\hat V_l(p)$ and $\hat V_l(p)\in\text{SU}(2)$. 
Simulation of $e^{-iHt}$ with optimal complexity with respect to the accuracy can be achieved with block-encodings~\cite{Chakraborty.19,Low.19}.
We say that $\hat U$ is a block-encoding~\cite{Chakraborty.19} of $\hat{H}$ if $(\langle 0|^{\otimes a}\otimes \mathbb{I}_n)\hat{U}(|0\rangle^{\otimes a}\otimes \mathbb{I}_n)\propto \hat{H}$ for some integer $a$. One common construction way of block-encoding is based on linear combination of unitaries~\cite{Low.19}.
We define $\hat G$ as a quantum state preparation operator satisfying $\hat G|0\rangle=|G\rangle\equiv\sum_{p}\sqrt{\alpha_p/\alpha}|p\rangle$ with $\alpha\equiv\sum_{p}\alpha_p$. It can be verified that $(\hat G^\dag\otimes \mathbb{I}_n) \text{select}(\hat V)(\hat G\otimes \mathbb{I}_n)$ is a block-encoding of $\hat H$.
Conventional ways to implement select($\hat V$) and block-encoding requires a circuit depth $O(nP)$~\cite{Childs.18}.

In contrast, according to Theorem~\ref{th:ab} and Lemma~\ref{lm:1},  $\hat G, \hat G^\dag$ can be realized with circuit depth $O(\log P)$, and select$(\hat V)$ can be realized with circuit depth $O(\log(Pn))$. So the block-encoding can be constructed with circuit depth $O(\log(nP))$.
Combining qubitization~\cite{Low.19} with our fast construction of block-encoding, we have the following result (see Sec.~IV of~\cite{sm} for details), which reduce the circuit depth exponentially to respect to $nP$.

\begin{theorem}[Hamiltonian simulation by qubitization]\label{th:hs}
Let $\hat{H}$ be a Hermitian operator expressed as Eq.~\eqref{eq:Hpauli}. Using only single- and two-qubit gates, the evolution $e^{-i\hat H t}$ can be simulated to precision $\varepsilon$ with circuit depth $O\left(\log (nP)(\alpha t+\log(1/\varepsilon))\right)$ and $O(nP)$ qubits.
 \end{theorem}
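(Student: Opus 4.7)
\textbf{\textit{Proof proposal for Theorem~\ref{th:hs}.---}} The plan is to construct an efficient block-encoding of $\hat H/\alpha$ using the tools already proved (Theorem~\ref{th:ab} and Lemma~\ref{lm:1}), and then feed it into qubitization in the standard way. The key observation, already foreshadowed in the paragraph preceding the theorem, is that the two factors comprising the block-encoding, namely the state-preparation operator $\hat G$ with $\hat G|0\rangle=|G\rangle$ and the selector $\text{select}(\hat V)$, each admit a logarithmic-depth implementation in our model.

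First, I would apply Theorem~\ref{th:ab} to the $\lceil\log_2 P\rceil$-qubit state $|G\rangle=\sum_p\sqrt{\alpha_p/\alpha}|p\rangle$, obtaining $\hat G$ and $\hat G^\dag$ in depth $O(\log P)$ using $O(P)$ ancillas. Second, since $\hat V(p)=\bigotimes_{l=1}^{n}\hat V_l(p)$ is precisely an $n$-word product unitary function indexed by $p\in\{0,\dots,P-1\}$ in the sense of Lemma~\ref{lm:1}, the PUM subroutine implements $\text{select}(\hat V)$ in depth $O(\log(nP))$ with $O(nP)$ ancillas. Composing, $\hat W\equiv(\hat G^\dag\otimes \mathbb{I}_n)\,\text{select}(\hat V)\,(\hat G\otimes \mathbb{I}_n)$ is a valid block-encoding of $\hat H/\alpha$ with total depth $O(\log(nP))$ and $O(nP)$ ancillas.

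Next, I would invoke the qubitization-based simulation algorithm, which simulates $e^{-i\hat H t}$ to precision $\varepsilon$ using $Q=O(\alpha t+\log(1/\varepsilon))$ calls to $\hat W$ and $\hat W^\dag$, interleaved with a reflection about the ancilla $|0\rangle^{\otimes a}$ subspace and a single-qubit signal-processing rotation. The reflection is a multi-controlled phase on $O(\log(nP))$ qubits and, using the logarithmic-depth fan-in tree built from the same ancillas, adds only $O(\log(nP))$ to the per-query depth. Multiplying, the total circuit depth is $Q\cdot O(\log(nP))=O\!\left(\log(nP)(\alpha t+\log(1/\varepsilon))\right)$.

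The step I expect to require the most care is the space accounting. Each call to $\hat W$ allocates $O(nP)$ ancillas inside $\hat G$ and $\text{select}(\hat V)$, and there are $Q$ such calls, so a naive bound would be $O(nPQ)$. The point is that Theorem~\ref{th:ab} and Lemma~\ref{lm:1} return all workspace ancillas to $|0\rangle$ at the end of the routine; hence the same pool of $O(nP)$ ancillary qubits can be recycled across all $Q$ sequential queries, yielding the claimed $O(nP)$ total space. Finally, to make the statement rigorous for the Clifford$+T$ gate set used in fault-tolerant settings, I would choose a per-gate synthesis precision $\varepsilon'=\Theta(\varepsilon/Q)$; the resulting $O(\log(1/\varepsilon'))$ Solovay--Kitaev overhead is absorbed into the $\log(1/\varepsilon)$ factor and does not change the stated scaling.
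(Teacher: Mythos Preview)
Your proposal is correct and follows essentially the same approach as the paper: build the block-encoding from $\hat G$ (via Theorem~\ref{th:ab}) and $\text{select}(\hat V)$ (via Lemma~\ref{lm:1}), each in logarithmic depth, and then invoke the qubitization query bound $O(\alpha t+\log(1/\varepsilon))$ from~\cite{Low.19}. Your additional remarks on ancilla recycling and Clifford$+T$ synthesis go slightly beyond what the paper records for this theorem but are consistent with it; note only that the reflection acts on the $\lceil\log_2 P\rceil$-qubit ancilla register (not $O(\log(nP))$ qubits), though this does not affect the final depth bound.
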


We note that another version of parallel Hamiltonian simulation has been proposed in Ref.~\cite{Zhicheng.21}, achieving doubly logarithmic circuit depth with respect to the precision $O(\log^3\log(1/\varepsilon))$. The algorithm is based on a state preparation method with
cubic circuit depth. In Sec.~V of~\cite{sm}, we show that the circuit depth can be further reduced to $O(\log^2\log(1/\varepsilon))$ based on Theorem.~\ref{th:ab}.

\textbf{\textit{Solving linear systems.---}}
Given an invertible matrix $H\in R^{2^n\times 2^n}$ and vector $b\in R^{2^n}$, quantum algorithms of  linear systems aim at generating an approximation of quantum state $|x\rangle$ proportional to $H^{-1}\cdot b$. For sparse $H$, $|x\rangle$ can be obtained with a circuit depth $O(\text{poly}(n))$~\cite{Harrow.09,Childs.17,Chakraborty.19}. The results has also been generalized to nonsparse cases~\cite{Wossnig.18}.
However, these quantum algorithms assume the query of quantum state preparation and Hamiltonian simulation oracles. In general, to achieve poly-logarithmic circuit depth, data structure with space complexity $O(2^n)$ is required, leaving room for classical dequantization algorithms~\cite{Tang.19,Chia.20,Gilyen.20}. Specifically, based on an analog data structure with $O(\text{nnz}(H)\cdot n)$ space complexity ($\text{nnz}(\cdot)$ refers to the number of nonzero entries), classical  dequantization algorithms~\cite{Chia.20,Gilyen.20} can sample from the distribution of the measurement outcomes of $|x\rangle$ with a circuit depth $O(\text{poly}(n))$.

Therefore, whether we could more efficiently solve linear systems heavily relies on efficiency of the quantum state preparation and Hamiltonian simulation oracles. Here, considering sparse matrices of Eq.~\eqref{eq:Hpauli}, we show an exponential advantage of quantum computing algorithms based on Theorem.~\ref{th:sp},~\ref{th:hs}.

\begin{theorem}[Solving linear system]\label{th:ls}
Let $\hat{H}$ be a Hermitian expressed as Eq.~\eqref{eq:Hpauli} with  condition number $\kappa$. Let $|b\rangle$ be a $O(\text{poly}(n))$-sparse quantum state. With only single- and two-qubit gates, the quantum state $|x\rangle$ proportional to $H^{-1}|b\rangle$ can be approximately prepared to precision $\varepsilon$ using {$\tilde O(\text{poly}(\log(nP),\alpha,\kappa))$}
circuit depth and {$ O(\text{poly}(n,P))$} qubits, where $\tilde O$ neglects the logarithmic dependence on $\kappa,1/\varepsilon$.
\end{theorem}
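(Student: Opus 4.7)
The plan is to assemble the three ingredients already in our toolbox: sparse state preparation for $|b\rangle$ (Theorem~\ref{th:sp}), the fast block-encoding of $\hat H$ underlying Theorem~\ref{th:hs}, and a standard block-encoding-based quantum linear systems solver such as the QSVT/LCU-polynomial inversion of Childs--Kothari--Somma. In other words, we swap out the expensive oracles in the existing $\tilde O(\kappa)$-query linear system solvers for our low-depth implementations, and then keep track of the resulting depth and ancilla counts.

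Concretely, I would proceed as follows. First, apply Theorem~\ref{th:sp} to prepare $|b\rangle$: since $|b\rangle$ is $\mathrm{poly}(n)$-sparse, this costs depth $O(\log n)$ and $O(\mathrm{poly}(n))$ ancillas. Second, construct a block-encoding of $\hat H/\alpha$ exactly as in the proof of Theorem~\ref{th:hs}: use Theorem~\ref{th:ab} to implement $\hat G$ and $\hat G^\dagger$ with $|G\rangle=\sum_p\sqrt{\alpha_p/\alpha}|p\rangle$ in depth $O(\log P)$, and use Lemma~\ref{lm:1} to realize $\mathrm{select}(\hat V)$ in depth $O(\log(nP))$. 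Composing gives a block-encoding of $\hat H/\alpha$ with depth $O(\log(nP))$ and $O(nP)$ ancillas. Third, feed this block-encoding and $|b\rangle$ into a block-encoding based linear-systems routine: approximate the function $1/x$ on $[1/\kappa,1]$ by a polynomial of degree $\tilde O(\kappa\log(1/\varepsilon))$, implement it via QSVT on the block-encoding, and use a single round of fixed-point amplitude amplification to prepare $|x\rangle \propto \hat H^{-1}|b\rangle$ to fidelity $1-\varepsilon$. Because QSVT on a subnormalized block-encoding of $\hat H/\alpha$ picks up an extra factor of $\alpha$, the total number of queries to the block-encoding is $\tilde O(\alpha\kappa\,\mathrm{polylog}(1/\varepsilon))$.

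Multiplying the per-query depth by the query count and adding the one-time cost of preparing $|b\rangle$ yields a total depth
\[
\tilde O\bigl(\alpha\kappa\,\log(nP)\bigr)+O(\log n)=\tilde O\bigl(\mathrm{poly}(\log(nP),\alpha,\kappa)\bigr),
\]
as claimed. The ancilla count is dominated by the block-encoding's $O(nP)$ workspace and the $O(\mathrm{poly}(n))$ qubits of the sparse state-preparation subroutine, together with $O(1)$ QSVT phase qubits and the fixed-point amplification register, giving $O(\mathrm{poly}(n,P))$ overall.

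The part of the argument that requires genuine care, rather than plug-and-play, is the bookkeeping of the subnormalization $\alpha$ and its interaction with $\kappa$. One must verify that the block-encoding from Theorem~\ref{th:hs}'s construction has subnormalization exactly $\alpha=\sum_p\alpha_p$, that the QSVT polynomial approximating $1/x$ is applied on the singular values of $\hat H/\alpha\in[1/(\alpha\kappa),1/\alpha]$, and that the resulting $\tilde O(\alpha\kappa)$ query complexity and $O(\log(1/\varepsilon))$ polynomial-degree dependence both propagate cleanly through the amplitude amplification step without spoiling either the depth or the qubit bound. The remaining pieces—sparse $|b\rangle$ preparation, block-encoding construction, and the textbook QSVT inversion analysis—are direct invocations of earlier results and should go through without surprise.
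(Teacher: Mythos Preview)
Your approach is correct and proves the theorem, but it takes a different route from the paper. The paper (Sec.~VI of the supplement) invokes the Childs--Kothari--Somma solver (their Lemma~\ref{lm:ls_1}), which calls a \emph{Hamiltonian simulation} oracle $\tilde O(\kappa)$ times for evolution time $t=\tilde O(\kappa)$, and then plugs in the qubitization-based simulation of Theorem~\ref{th:hs}; this yields depth $\tilde O(\log(nP)\,\alpha\kappa^2)$ for the simulation part plus $\tilde O(\log(nd)\,\kappa)$ for the repeated preparation of $|b\rangle$. You instead bypass Hamiltonian simulation entirely and apply QSVT polynomial inversion directly to the block-encoding of $\hat H/\alpha$. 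This is a cleaner, more modern pipeline and can in principle give a better $\kappa$-dependence; both routes comfortably land inside the $\tilde O(\mathrm{poly}(\log(nP),\alpha,\kappa))$ bound the theorem asks for.

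Two minor bookkeeping points in your write-up are worth tightening. First, the preparation of $|b\rangle$ is not a one-time cost: amplitude amplification (fixed-point or otherwise) calls $\hat B$ and $\hat B^\dagger$ $\tilde O(\kappa)$ times, so the $O(\log n)$ depth you quote should be multiplied accordingly---the paper tracks this explicitly. Second, with the plain QSVT-plus-amplification pipeline the total block-encoding query count is $\tilde O(\alpha\kappa^2)$ rather than $\tilde O(\alpha\kappa)$: the polynomial for $1/x$ on $[1/(\alpha\kappa),1/\alpha]$ has degree $\tilde O(\alpha\kappa)$, and the $\Omega(1/\kappa)$ amplitude of the unnormalized output costs another factor of $\kappa$ to amplify. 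Reaching $\tilde O(\alpha\kappa)$ requires variable-time amplitude amplification or a discrete-adiabatic solver, not ``a single round of fixed-point amplitude amplification.'' Neither issue affects the theorem as stated, since it only demands polynomial dependence on $\alpha$ and $\kappa$.
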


With $P,\alpha=O(\text{poly}(n))$, our method has $\tilde O(\text{poly}(n,\kappa))$ circuit depth and $O(\text{poly}(n))$ space complexity.
In comparison, the data structures by classical dequantization algorithms~\cite{Chia.20,Gilyen.20} have $ O(\text{nnz}(\hat H)\log N)=O(NPn)$ space complexities, which is exponentially larger.
Furthermore, assuming that $P,\alpha=O(1)$ and $|b\rangle$ is $O(1)$-sparse,  the circuit depth of our parallel method is further reduced to {$\tilde O(\log(n)\text{poly}(\kappa))$}, which is also exponentially lower than the depth {$\tilde O(n\,\text{poly}(\kappa))$} of the quantum algorithms using sequential select$(\hat H)$~\cite{Berry.15,Childs.18}; Yet, both of them have $O(n)$ space complexity. More details are provided in Sec.~VI of~\cite{sm}.

\textbf{\textit{QRAMs.---}}
At last, we show applications for QRAMs.
Given a binary dataset $\mathcal{D}\equiv \left\{D_k\right\}_{k=0}^{2^n-1}\in\{0,1\}^n$, QRAMs are memory architectures enabling the transformation
\begin{align}\label{eq:qram_0}
\text{QRAM}(\mathcal{D})\sum_{k=0}^{2^n-1}\psi_k|k\rangle|0\rangle=\sum_{k=0}^{2^n-1}\psi_k|k\rangle|D_k\rangle.
\end{align}
Efficient implementation of QRAM is important for many applications, especially for quantum machine learning~\cite{Biamonte.17}.
Conventional methods have $O(n)$ circuit depth using $O(N)$ ancillary qubits~\cite{Giovannetti.08,Giovannetti.08_2,Hann.19,Hann.21}. If $\mathcal{D}$ is $d$ sparse (with at most $d$ nonzero elements), the space complexity can be significantly reduced using quantum read-only access memory (QROM)~\cite{Babbush.18}. Alternatively, Eq.~\eqref{eq:qram_0} can be realized by performing $n$-Toffoli gates for $d$ times. However, these methods have circuit depth linear with $d$, which is not yet optimal.

On the other hand, by defining  $f_{\text{qram}}(k)\equiv D_k$, Eq.~\eqref{eq:qram_0} can be satisfied by select$(f_{\text{qram}})$. According to Lemma~\ref{lm:2},  we can obtain the following result.
\begin{theorem}[Sparse QRAM]\label{th:sqram}
With only single- and two-qubit gates, arbitrary $\text{QRAM}(\mathcal{D})$ in Eq.~\eqref{eq:qram_0} with $d$-sparse $\mathcal{D}$ can be implemented with circuit depth $O(\log(nd))$ and $O(nd)$ ancillary qubits.
\end{theorem}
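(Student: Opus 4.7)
\textbf{Proof proposal for Theorem~\ref{th:sqram}.} The plan is to reduce QRAM directly to the sparse Boolean memory subroutine of Lemma~\ref{lm:2}. First I would define the Boolean function $f_{\text{qram}}:\{0,1\}^n\to\{0,1\}^n$ by $f_{\text{qram}}(k)\equiv D_k$. Its selector acts on the computational basis as $\text{select}(f_{\text{qram}})|k\rangle|z\rangle=|k\rangle|z\oplus D_k\rangle$; specializing to $|z\rangle=|0\rangle^{\otimes n}$ and extending by linearity turns $\sum_k\psi_k|k\rangle|0\rangle$ into $\sum_k\psi_k|k\rangle|D_k\rangle$, which is exactly Eq.~\eqref{eq:qram_0}. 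So it suffices to implement $\text{select}(f_{\text{qram}})$.

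Next I would verify the sparsity hypothesis of Lemma~\ref{lm:2}. Since $\mathcal{D}$ is $d$-sparse, the support $\mathcal{S}_{f_{\text{qram}}}=\{k:D_k\neq 0^n\}$ has at most $d$ elements, so $f_{\text{qram}}$ is an $n$-index, $n$-word, $d$-sparse Boolean function. Plugging $s=d$ and $\tilde{n}=n$ into the circuit-depth bound of Lemma~\ref{lm:2} gives $O(\log(n\cdot d\cdot n))=O(\log(nd))$, which is the advertised depth scaling.

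The main obstacle is reconciling the ancilla count: naively substituting $s=d$, $\tilde n=n$ into the generic bound $O(ns\tilde n)$ from Lemma~\ref{lm:2} yields $O(n^2 d)$, a factor of $n$ worse than the $O(nd)$ asserted in Theorem~\ref{th:sqram}. I would close this gap by exploiting structure that is specific to QRAM but not used in the generic statement of Lemma~\ref{lm:2}: the word register starts in $|0\rangle^{\otimes n}$, so $\text{select}(f_{\text{qram}})$ only needs to \emph{write} $D_k$ rather than compute $z\oplus D_k$ for an arbitrary $z$, and the $n$ output bits corresponding to a single stored address can share the same ``address-match'' indicator. Concretely, I would reuse the binary-tree machinery underlying Lemmas~\ref{lm:1} and~\ref{lm:2} in two stages: (i) compare the input $k$ against the $d$ stored addresses in parallel with $O(\log(nd))$ depth and $O(nd)$ ancillas, producing one match qubit per stored address; and (ii) fan each match qubit out through an $O(\log n)$-depth tree to $n$ controlled bit-flips that write the corresponding $n$-bit value $D_k$ into the word register, which costs only $O(n)$ ancillas per address, i.e.\ $O(nd)$ in total. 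The two stages compose to give the claimed $O(\log(nd))$ depth and $O(nd)$ ancillas, completing the proof.
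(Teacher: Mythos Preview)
Your reduction to Lemma~\ref{lm:2} via $f_{\text{qram}}(k)\equiv D_k$ is exactly the paper's one-line argument. The discrepancy you found in the ancilla count is not a real gap in the theorem but a misreading of the setup: in this paper each $D_k$ is a \emph{single bit}, not an $n$-bit string. The supplement states this explicitly (``$D_k\in\{0,1\}$''), and the surrounding context confirms it (the alternative implementation is ``$n$-Toffoli gates for $d$ times'', and the continuous-amplitude generalization has $|D_k\rangle\in\mathbb{C}^2$). With $\tilde n=1$ and $s=d$, Lemma~\ref{lm:2} already gives depth $O(\log(nd\cdot 1))=O(\log(nd))$ and ancillas $O(n\cdot d\cdot 1)=O(nd)$, so nothing further is needed.

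Your proposed workaround is therefore unnecessary here, though as a side remark it is essentially sound for the genuinely multi-bit case. Two small corrections to it: (i) the fact that the word register starts in $|0\rangle$ is irrelevant---sharing one address-match indicator across all output bits works just as well for the full XOR selector; (ii) in stage~(ii) you cannot simply CNOT all $d$ fanned-out match bits into each output qubit in parallel, since they share targets; you would need an $O(\log d)$-depth parity tree per output bit, which still fits in the stated budgets but should be said.
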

\noindent Our protocol thus has an exponentially lower circuit depth compared to existing ones~\cite{Giovannetti.08,Giovannetti.08_2,Babbush.18,Hann.19,Hann.21}, while the space complexity remains polynomial.

Moreover, we can construct a nonsparse QRAM for continuous amplitude, i.e. $|D_k\rangle\in\mathbb{C}^2$, based on Lemma~\ref{lm:1} (see Sec.VII of~\cite{sm}). Our method requires $O(n)$ circuit depth and $O(N)$ ancillary qubits. The connectivity is identical to those for binary QRAMs~\cite{Giovannetti.08,Giovannetti.08_2,Hann.19,Hann.21}, which is more sparse than other continuous QRAM schemes developed recently~\cite{Yuan.22,Clader.22}  assuming all-to-all connectivity.

\textbf{\textit{Discussions.---}} 
We have achieved optimal circuit depth for general and sparse quantum state preparation. 
While Theorem~\ref{th:ab},~\ref{th:sp} assume that only single- and two-qubit gates are allowed, our results can be generalized to constant weight operations, i.e. operations applied to constant number of qubits. It therefore represents a fundamental limit of quantum information processing imposed by constant-weight operations. Future direction includes finding optimal space-time trade-offs for sparse state preparation, and exploring more practical applications.

\begin{acknowledgments}

\noindent \textbf{\emph{Acknowledgments.---}}
We thank Bujiao Wu for insightful discussions. This work is supported by the National Natural Science Foundation of China Grant No.~12175003, and Emerging Engineering Interdisciplinary-Young Scholars Project, Peking University, the Fundamental Research Funds for the Central Universities.
\end{acknowledgments}

%

\vspace{1cm}
\onecolumngrid
\newpage

\begin{center}
{\bf\large Supplementary material}
\end{center}
\vspace{0.5cm}

\setcounter{secnumdepth}{3}  
\setcounter{equation}{0}
\setcounter{figure}{0}
\setcounter{table}{0}
\setcounter{section}{0}

\renewcommand{\theequation}{S-\arabic{equation}}
\renewcommand{\thefigure}{S\arabic{figure}}
\renewcommand{\thetable}{S-\Roman{table}}
\renewcommand\figurename{Supplementary Figure}
\renewcommand\tablename{Supplementary Table}
\newcommand\citetwo[2]{[S\citealp{#1}, S\citealp{#2}]}
\newcommand\citecite[2]{[\citealp{#1}, S\citealp{#2}]}

\newcolumntype{M}[1]{>{\centering\arraybackslash}m{#1}}
\newcolumntype{N}{@{}m{0pt}@{}}

\makeatletter \renewcommand\@biblabel[1]{[S#1]} \makeatother

\makeatletter \renewcommand\@biblabel[1]{[S#1]} \makeatother



\section{Arbitrary quantum state preparation}\label{app:ab}
Recall that $H$ is a binary tree of qubits with $(n+1)$-layer. The $l$th layer of $H$ is denoted as $H_{l}$, and the $j$th node of $H_l$ is denoted as $H_{l,j}$ ($0\leqslant l\leqslant n$, $0\leqslant j\leqslant 2^l-1$). We require another $n$ binary trees, each denoted as $V_l$ ($1\leqslant l\leqslant n$). $V_l$ has $(l+1)$ layers, and the $m$th layer of $V_l$ is denoted as $V_{l,m}$ ($0\leqslant m\leqslant l$). The $j$th node of $V_{l,m}$ is denoted as $V_{l,m,j}$ ($0\leqslant j\leqslant 2^m-1$). The leaf $V_{l,l,j}$ is connected to $H_{l,j}$. $V_{l,0}$ corresponds to $V_{l,\text{root}}$ in the main text and $V_{l,l}$ correspnds to $V_{l,\text{leaf}}$ in the main text. The total space complexity is therefore $O(N)$.  Before preparation, $H_{0,0}$ is initialized to $|1\rangle$ while all other nodes are initialized to $|0\rangle$.

For target state $\sum_{k=0}^{N-1}\alpha_k|k\rangle$, we define $b_{n,k}\equiv|a_{k}|$, $b_{l,j}\equiv\sqrt{|b_{l+1,2j}|^2+|b_{l+1,2j+1}|^2}$ for $0\leqslant l\leqslant n-1$, and $\theta_{l,j}\equiv \arccos (b_{l,2j}/b_{l-1,j})$.
It is worthy to note that from the classical description in Eq.~\eqref{eq:ab}, compiling the quantum circuit with $O(N)$ space complexity can also be realized with circuit depth $O(n)$, because one can calculate all $\theta_{l,j}$ for same $l$ with $O(1)$ circuit depth  in parallel.

 The pseudo code of arbitrary state preparation is given in Alg.~\ref{alg:ab}. We have defined 
\begin{align}\label{eq:S}
S(\theta,a,b)=\begin{pmatrix}1&0&0&0\\0&0&\sin\theta&\cos\theta&\\0&0&\cos\theta&-\sin\theta&\\0&1&0&0\end{pmatrix}
\end{align}
as two qubit gates applied to nodes $a$ and $b$ with basis $\{|00\rangle,|01\rangle,|10\rangle,|11\rangle\}$, and $\text{Ph}(\theta,a)$ as the single qubit phase gate $\begin{pmatrix}1&\\&e^{i\theta}\end{pmatrix}$
applied to node $a$ with basis $\{|0\rangle,|1\rangle\}$. NOT$(a)$ represents the NOT gate applied to $a$; SWAP$(a,b)$ represents the swap gate applied at $a$ and $b$; CNOT$(a;b)$ represents the CNOT gate with $a$ as controlled qubit and $b$ as target qubit; CCNOT$(a,b;c)$ represents the three-qubit controlled-controlled-NOT gate with $a$ and $b$ as controlled qubits and $c$ as target qubit.

The algorithm contains $5$ stages as follows.

\textit{Stage 1}. This stage corresponds to line $1$-$6$ in Alg.~\ref{alg:ab}.
Take the first and second steps (for loop in line $1$ with $l=1,2$) as examples, the quantum state for $H_0$, $H_1$ and $H_2$ follows the transformation
\begin{align}
&|1\rangle_{H_{0}}|0\rangle^{\otimes2}_{H_{1}}|0\rangle^{\otimes4}_{H_{2}}\notag\\
\mathop{\longrightarrow}\limits^{\text{}}&|1\rangle_{H_{0}}|10\rangle_{H_{1}}|0\rangle^{\otimes4}_{H_{2}}\notag\\
\mathop{\longrightarrow}\limits^{\text{}}&|1\rangle_{H_{0}}(b_{1,0}|10\rangle_{H_{1}}+b_{1,1}|01\rangle_{H_{1}})|0\rangle^{\otimes4}_{H_{2}}\notag\\
\mathop{\longrightarrow}\limits^{\text{}}&|1\rangle_{H_{0}}\left(b_{1,0}|10\rangle_{H_{1}}|1000\rangle_{H_{2}}+b_{1,1}|01\rangle_{H_{1}}|0010\rangle_{H_{2}}\right)\notag\\
\mathop{\longrightarrow}\limits^{\text{}}&|1\rangle_{H_{0}}\left(b_{1,0}|10\rangle_{H_{1}}\left(\frac{b_{2,0}}{b_{1,0}}|1000\rangle_{H_{2}}+\frac{b_{2,1}}{b_{1,0}}|0100\rangle_{H_{2}}\right)+b_{1,1}|01\rangle_{H_{1}}\left(\frac{b_{2,2}}{b_{1,1}}|0010\rangle_{H_{2}}+\frac{b_{2,3}}{b_{1,1}}|0001\rangle_{H_{2}}\right)\right)\notag\\
=&\sum_{k=0}^{2^2-1}b_{2,k}|1\rangle_{H_{0}}\bigotimes_{l=1}^{2}|(k,l)\rangle'_{H_{l}}.
\end{align}
We have defined $(k,l)\equiv k_nk_{n-1}\cdots k_{n-l+1}$, $|(k,l)\rangle'\equiv|0\rangle^{\otimes (k,l)}|1\rangle|0\rangle^{\otimes 2^l-(k,l)-1}$. We has also defined the tensor product $\bigotimes_{l=L_a}^{L_{b}}x_{l}\equiv x_{L_a}\otimes x_{L_a+1}\otimes \cdots\otimes  x_{L_{b}}$ for $L_{b}>L_a$, or $\bigotimes_{l=L_a}^{L_{b}}x_{l}\equiv x_{L_a}\otimes x_{L_a-1}\otimes \cdots\otimes  x_{L_{b}}$ for $L_{b}\leqslant L_a$.
Similarly, after the $n$th step, the quantum state of $H$ becomes
\begin{align}\label{eq:1n}
|\psi'_{\text{stage 1}}\rangle=\sum_{k=0}^{2^n-1}b_{n,k}|1\rangle_{H_{0}}\bigotimes_{l=1}^{n}|(k,l)\rangle'_{H_{l}}.
\end{align}
After implementing line $6$ of Alg.~\ref{alg:ab}, we obtain
\begin{align}\label{eq:app_out_ed}
|\psi_{\text{stage 1}}\rangle=\sum_{k=0}^{2^n-1}a_{k}|1\rangle_{H_{0}}\bigotimes_{l=1}^{n}|(k,l)\rangle'_{H_{l}},
\end{align}
which is Eq.~\eqref{eq:ab_2} in the main text. Each step of this stage have circuit depth $O(1)$, so the total circuit depth is $O(n)$.

\textit{Stage 2}. This stage corresponds to lines $7$-$9$ of Alg.~\ref{alg:ab} with PCNOT the parallel CNOT gates described in Sec.~\ref{sec:pcnot_fanout} (Alg.~\ref{alg:pcnot}). The goal of this stage is to perform the transformation $|(k,l)\rangle'_{H_l}|0\rangle_{V_{l,0}}\rightarrow|(k,l)\rangle'_{H_l}|k_{n-l+1}\rangle_{V_{l,0}}$.

There is only one activated qubit (at state $|1\rangle$) at each layer while all other qubits are at state $|0\rangle$. We suppose the activated qubit at layer $H_l$ is $H_{l,j_{\text{act}(l)}}$. For Eq.~\eqref{eq:app_out_ed}, it can be verified that if $j_{\text{act}}(l)$ is an even, we have $|k_{n-l+1}\rangle=|0\rangle$. Similarly, if $j_{\text{act}}(l)$ is an odd, we have $|k_{n-l+1}\rangle=|1\rangle$. As an example, for $|k\rangle=|1010\rangle$, we have $|(1010,1)\rangle'=|1\rangle'=|01\rangle$, $|(1010,2)\rangle'=|10\rangle'=|0010\rangle$, $|(1010,3)\rangle'=|101\rangle'=|00000100\rangle$, $|(1010,4)\rangle'=|1010\rangle'=|0000000000100000\rangle$. The activated qubit at each layer corresponds to $j_{\text{act}}(1)=1$ (odd), $j_{\text{act}}(2)=2$ (even), $j_{\text{act}}(3)=5$ (odd) and $j_{\text{act}}(4)=10$ (even), respectively. Accordingly, we have $|k_{4-1+1}\rangle=|1\rangle, |k_{4-2+1}\rangle=|0\rangle, |k_{4-3+1}\rangle=|1\rangle, |k_{4-4+1}\rangle=|0\rangle$.

Based on the observation above, one just need to flip $V_{l,0,0}$ when the activated qubit at $H_l$ corresponds to an odd $j_{\text{act}}(l)$, which is the consequence of this stage. We therefore obtain
\begin{align}
|\psi_{\text{stage 2}}\rangle=\sum_{k=0}^{2^n-1}a_{k}|1\rangle_{H_0}\bigotimes_{l=1}^{n}|(k,l)\rangle'_{H_l}|k_{n-l+1}\rangle_{V_{l,0}}.
\end{align}
 Lines $7$, $9$ of Alg.~\ref{alg:ab} has circuit depth $O(1)$, and line $7$ has circuit depth $O(n)$. So this stage has circuit depth $O(n)$.

\begin{figure}
	\centering
	\includegraphics[width=\columnwidth]{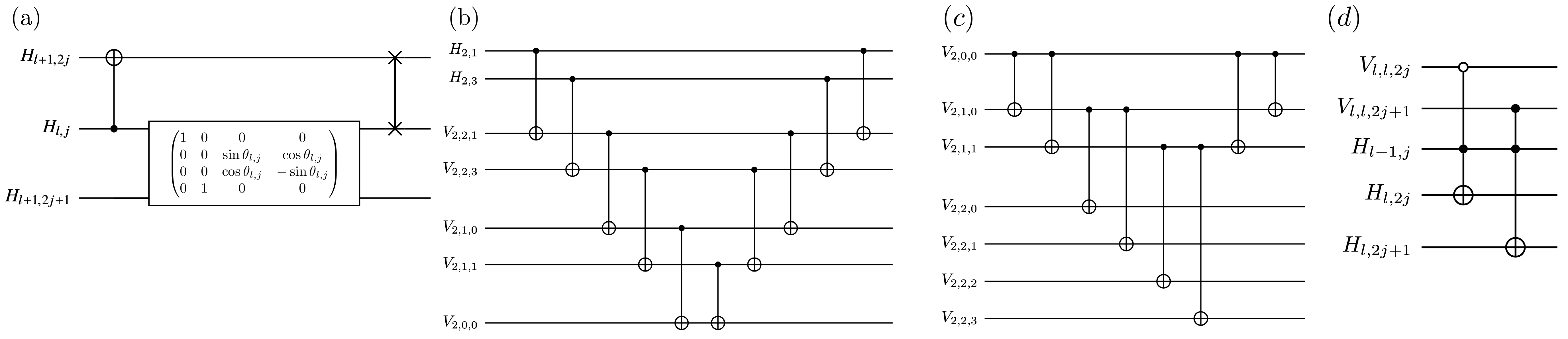}
	\caption{Illustration of quantum circuits for arbitrary quantum state preparation. (a) Operations at stage $1$ corresponding to line $2$ to $4$ of Alg.~\ref{alg:ab}. (b) Example of the transformation $|(k,2)\rangle'_{H_2}|0\rangle_{V_{2,0}}\rightarrow|(k,2)\rangle'_{H_2}|k_{n-1}\rangle_{V_{2,0}}$ at stage $2$, corresponding to line $7$ to $9$ of Alg.~\ref{alg:ab}. (c) Example of the fanout operation $|k_{n-1}\rangle_{V_{2,0}}|0\rangle^{\otimes 4}_{V_{2,2}}\rightarrow|k_ {n-1}\rangle_{V_{2,0}}|k_{n-1}\rangle^{\otimes 4}_{V_{2,2}}$ at stage 3, corresponding to line $10$ of Alg.~\ref{alg:ab}. (d) Quantum circuit for uncomputation operations at stage 4 corresponding to line $13$ to $16$ of Alg.~\ref{alg:ab}.}
	\label{fig:uc}
\end{figure}

The remaining task is to uncompute $H$. It can be observed that $H_{l,j}$ is at state $|1\rangle$, if and only if its parent is at state $|1\rangle$ and $j$ has the same parity to $k_{n-l+1}$. Therefore, by flipping $H_{l,j}$ conditioned on the states of its parent and $V_{l,0}$, one can bring $H_{l,j}$ back to state $|0\rangle$. We expect to uncompute all nodes at layer $H_l$ in a parallel way, so we need $2^l$ ``copies'' of $|k_{n-l+1}\rangle$. Based on the idea above, the uncomputation is realized with the following 3 stages.

\textit{Stage 3}. This stage corresponds to line $10$ of Alg.~\ref{alg:ab} (see also Alg.~\ref{alg:fan} in Sec.~\ref{sec:pcnot_fanout}). The fanout operation ``copy'' the state at the root of $V_l$ to the leaves of $V_l$. According to Eq.~\eqref{eq:fan1}, the quantum state of $H$, $V_{l,0}$ and $V_{l,l}$ for all $0\leqslant j\leqslant 2^l-1$ becomes
\begin{align}\label{eq:vf}
|\psi_{\text{stage 3}}\rangle=\sum_{k=0}^{2^n-1}a_{k}|1\rangle_{H_0}\bigotimes_{l=1}^{n}|(k,l)\rangle'_{H_l}|k_{n-l+1}\rangle_{V_{l,0}}|k_{n-l+1}\rangle^{\otimes 2^l}_{V_{l,l}}.
\end{align}
Because the fanout operation for each $l$ can be implemented in parallel, the circuit depth of this stage is $O(n)$.

\textit{Stage 4}. This stage corresponds to line $11$-$17$ of Alg.~\ref{alg:ab}, and our goal is to uncompute $H$. In each basis of Eq.~\eqref{eq:vf}, we have
\begin{align}
H_{l,2j} \;\text{is activated} &\Longleftrightarrow \;V_{l,l,2j} \;\text{is not activated, and} \;H_{l-1,j} \;\text{is activated}\\
H_{l,2j+1} \;\text{is activated} &\Longleftrightarrow \;\text{both}\; V_{l,l,2j+1}\; \text{and} \;H_{l-1,j} \;\text{are activated}
\end{align}
So line $12$ and $13$ uncomputes $H_{l,2j}$ and $H_{l,2j+1}$ respectively.
With total circuit depth $O(n)$, the quantum state after this stage becomes
\begin{align}\label{eq:??}
|\psi_{\text{stage 4}}\rangle=\sum_{k=0}^{2^n-1}a_{k}|1\rangle_{H_0}\bigotimes_{l=1}^{n}|0\rangle^{\otimes 2^l}_{H_l}|k_{n-l+1}\rangle_{V_{l,0}}|k_{n-l+1}\rangle^{\otimes 2^l}_{V_{l,l}}.
\end{align}

\textit{Stage 5}. This stage corresponds to line $18$ of Alg.~\ref{alg:ab}, after which we obtain
\begin{align}\label{eq:???}
|\psi'_{\text{stage 5}}\rangle=\sum_{k=0}^{2^n-1}a_{k}|1\rangle_{H_0}\bigotimes_{l=1}^{n}|0\rangle^{\otimes 2^l}_{H_l}|k_{n-l+1}\rangle_{V_{l,0}}|0\rangle^{\otimes 2^l}_{V_{l,l}}.
\end{align}
By tracing out all nodes except for $V_{l,0}$ for $l\in\{1,\cdots,n\}$, we obtain
\begin{align}\label{eq:targ_sup}
|\psi_{\text{stage 5}}\rangle=\sum_{k=0}^{2^n-1}\alpha_k\bigotimes_{l=1}^{n}|k_{n-l+1}\rangle_{V_{l,0}}.
\end{align}
Because $|k\rangle\equiv|k_nk_{n-1},\cdots k_1\rangle$,  Eq.~\eqref{eq:targ_sup} is equivalent to the target state $|\psi\rangle$ with $V_{l,0}$ being the $(n-l+1)$th digit.

To sum up, the target state is prepared with totally $\Theta(n)$ circuit depth, saturating the lower bound. Note that for reaching the circuit depth $\Theta(n)$, at least $\Omega(N/n)$ ancillary qubits are required. So the space complexity $O(N)$ of our method is nearly optimal.

\subsection{Parallel CNOT gate and fanout operation based on binary tree}\label{sec:pcnot_fanout}
Given an $(L+1)$-layer binary tree $T$ (each node represents a qubit), we denote its $l$th ($0\leqslant l\leqslant L$) layer as $T_l$, and the $j$th node ($0\leqslant j\leqslant 2^l-1$) at the $l$th layer as $T_{l,j}$. The parallel CNOT (PCNOT) gate is described in Alg.~\ref{alg:pcnot}. Note that the operations at each line can be realized with $O(1)$ circuit depth. To see how it works, we consider a subspace with all $T_{l,j}$ for $1\leqslant l\leqslant L-1$ at $|0\rangle$, and there is either zero or single activated qubit at $T_L$. The available basis for $T_L$ is $\mathcal{L}=\left\{|\text{vac}\rangle',|0\rangle',|1\rangle' \cdots |2^L-1\rangle'\right\}$, where $|\text{vac}\rangle'\equiv|0\rangle^{\otimes 2^L}$ and $|j\rangle'\equiv|0\rangle^{\otimes j}|1\rangle|0\rangle^{\otimes 2^L-j-1}$, and the basis for $T_{0,0}$ is $\{|0\rangle,|1\rangle\}$. We have
\begin{align}
&\text{PCNOT}(T)|z\rangle_{T_0}|j\rangle'_{T_L}=|\overline z\rangle_{T_0}|j\rangle'_{T_L} \quad\text{for}\; |j\rangle'_{T_L}\neq|\text{vac}\rangle_{T_L}, \\
&\text{PCNOT}(T)|z\rangle_{T_0}|j\rangle'_{T_L}=|z\rangle_{T_0}|j\rangle'_{T_L} \quad\text{for}\; |j\rangle'_{T_L}=|\text{vac}\rangle_{T_L}.
\end{align}
In other words, we flip the root if there is an activated qubit at the leaf layer.

The fanout operation for $T$ is described in Alg.~\ref{alg:fan}. In the same subspace above for PCNOT, we have
\begin{align}
&\text{Fanout}(T)|z\rangle_{T_0}|\text{vac}\rangle_{T_L}=|z\rangle_{T_0}|z\rangle^{\otimes 2^L}_{T_L},\label{eq:fan1}\\
&\text{Fanout}(T)|z\rangle_{T_0}|z\rangle^{\otimes 2^L}_{T_L}=|z\rangle_{T_0}|\text{vac}\rangle_{T_L}.\label{eq:fan2}
\end{align}
Each step in Alg.~\ref{alg:pcnot} and Alg.~\ref{alg:fan} can be realized with circuit depth $O(1)$, so both operations have total circuit depth $O(L)$.

\begin{algorithm} [H]
\caption{: Arbitrary quantum state preparation $\\$Input: rotation angles $\theta_{l,j}$ for $1\leqslant l\leqslant n$ and $0\leqslant j\leqslant 2^l-1$; phase angles $\arg(a_k)$ for $0\leqslant k\leqslant 2^n-1$
 $\\$ Require:  Binary trees $H$ and $V_l$ for all $1\leqslant l\leqslant n$ with all nodes initialized as $|0\rangle$, except for $H_{0,0}$ initialized as $|1\rangle$
 $\\$ Output: quantum state described in Eq.~\eqref{eq:ab} with $V_{l,0}$ being the $(n-l+1)$th digit}
\label{alg:ab}
\begin{algorithmic}[1]

\STATE \textbf{for}  $l=1$ to $n$:
\STATE \quad $\text{CNOT}\left(H_{l-1,j};H_{l,2j}\right)$ for all $0\leqslant j\leqslant 2^{l-1}-1$ 
\STATE \quad $S(\theta_{l,j}, H_{l-1,j},H_{l,2j+1})$ for all $0\leqslant j\leqslant 2^{l-1}-1$ 
\STATE \quad $\text{SWAP}\left(H_{l,2j},H_{l-1,j}\right)$ for all $0\leqslant j\leqslant 2^{l-1}-1$ 
\STATE \textbf{end}
\STATE $\text{Ph}(\arg(a_j), H_{n,j})$ for all $0\leqslant j\leqslant 2^n-1$ 
\STATE $\text{CNOT}(H_{l,j};V_{l,l,j})$ for all $1\leqslant l\leqslant n$ and $j\in\{1,3,5 \cdots 2^l-1\}$ 
\STATE $\text{PCNOT}(V_l)$ for all $1\leqslant l\leqslant n$ 
\STATE $\text{CNOT}(H_{l,j};V_{l,l,j})$ for all $1\leqslant l\leqslant n$ and $j\in\{1,3,5 \cdots 2^l-1\}$ 
\STATE $\text{Fanout}(V_l)$ for all $1\leqslant l\leqslant n$ 
\STATE \textbf{for}  $l'=1$ to $n$:
\STATE  \quad Let $l=n-l'+1$

\STATE \quad $\text{NOT}(V_{l,l,2j})$ for all $j\in\{0,1,2 \cdots 2^{l-1}-1\}$ 
\STATE \quad $\text{CCNOT}(V_{l,l,2j},H_{l-1,j};H_{l,2j})$ for all $j\in\{0,1,2 \cdots 2^{l-1}-1\}$ 
\STATE \quad $\text{NOT}(V_{l,l,2j})$ for all $j\in\{0,1,2 \cdots 2^{l-1}-1\}$ 
\STATE \quad $\text{CCNOT}(V_{l,l,2j+1},H_{l-1,j};H_{l,2j+1})$ for all $j\in\{0,1,2 \cdots 2^{l-1}-1\}$ 
\STATE \textbf{end}
\STATE $\text{Fanout}(V_l)$ for all $1\leqslant l\leqslant n$ 
\end{algorithmic}
\end{algorithm}

\begin{algorithm} [H]
\caption{: $\text{PCNOT}(T)$  }
\label{alg:pcnot}
\begin{algorithmic}[1]

\STATE \textbf{for}  $l'=0$ to $L-1$:
\STATE \quad Let $l=L-l'$

\STATE \quad  $\text{CNOT}\left(T_{l,j};T_{l-1,j/2}\right)$ for all $j\in\{0,2,4 \cdots 2^l-2\}$

\STATE \quad $\text{CNOT}\left(T_{l,j};T_{l-1,(j-1)/2}\right)$ for all $j\in\{1,3,5 \cdots 2^l-1\}$

\STATE \textbf{end}

\STATE \textbf{for}  $l'=0$ to $L-2$:

\STATE \quad Let $l=L-l'$

\STATE \quad  $\text{CNOT}\left(T_{l,j};T_{l-1,j/2}\right)$ for all $j\in\{0,2,4 \cdots 2^l-2\}$

\STATE \quad $\text{CNOT}\left(T_{l,j};T_{l-1,(j-1)/2}\right)$ for all $j\in\{1,3,5 \cdots 2^l-1\}$
\STATE \textbf{end}

\end{algorithmic}
\end{algorithm}

\begin{algorithm} [H]
\caption{: $\text{Fanout}(T)$ }
\label{alg:fan}
\begin{algorithmic}[1]

\STATE \textbf{for}  $l=0$ to $L-1$:
\STATE \quad $\text{CNOT}\left(T_{l,j};T_{l+1,2j}\right)$ for all $0\leqslant j\leqslant 2^l-1$
\STATE \quad $\text{CNOT}\left(T_{l,j};T_{l+1,2j+1}\right)$ for all $0\leqslant j\leqslant 2^l-1$
\STATE \textbf{end}

\STATE \textbf{for}  $l=0$ to $L-2$:
\STATE \quad $\text{CNOT}\left(T_{l,j};T_{l+1,2j}\right)$ for all $0\leqslant j\leqslant 2^l-1$
\STATE \quad $\text{CNOT}\left(T_{l,j};T_{l+1,2j+1}\right)$ for all $0\leqslant j\leqslant 2^l-1$
\STATE \textbf{end}

\end{algorithmic}
\end{algorithm}

\section{circuit depth lower bound for sparse state preparation}
\begin{lemma}\label{lm:opt_s}
With only single- and two-qubit gates and arbitrary amount of ancillary qubits, the minimum circuit depth of preparing arbitrary $d$-sparse ($ d\geqslant 2$) quantum states is $\Omega(\log(nd))$.
\end{lemma}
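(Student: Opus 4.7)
The plan is to establish the $\Omega(\log(nd)) = \Omega(\log n + \log d)$ bound by exhibiting two families of $d$-sparse targets: one forcing depth $\Omega(\log n)$ and one forcing depth $\Omega(\log d)$. Since $\max\{\log n, \log d\} \geqslant \tfrac{1}{2}\log(nd)$, combining the two immediately yields the lemma, so the bulk of the work is to handle each factor separately.

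For the $\Omega(\log n)$ factor I would use the $2$-sparse GHZ-type target $|\psi_1\rangle = \tfrac{1}{\sqrt 2}\bigl(|0\rangle^{\otimes n}+|1\rangle^{\otimes n}\bigr)$, which is trivially $d$-sparse for every $d\geqslant 2$ by appending zero-amplitude indices. In any circuit of depth $T$ with single- and two-qubit gates (and arbitrarily many ancillas), the backward light cone of each output qubit has size at most $2^T$. Starting from the uncorrelated product state $|0\rangle^{\otimes N}$, two output qubits whose backward light cones are disjoint remain uncorrelated, but $|\psi_1\rangle$ carries full $n$-way correlations $\langle Z_1 Z_2\cdots Z_n\rangle = \pm 1$; a standard operator-spreading/light-cone argument of the type used in~\cite{Zhang.21,Sun.21_new} then forces $2^T = \Omega(n)$, i.e.\ $T = \Omega(\log n)$.

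For the $\Omega(\log d)$ factor I would take a $d$-sparse state that is dense on $\tilde n = \lceil\log_2 d\rceil$ qubits, for instance $|\psi_2\rangle = |\phi\rangle_A \otimes |0\rangle_B^{\otimes(n-\tilde n)}$, where $|\phi\rangle_A$ is an arbitrary $\tilde n$-qubit state with every amplitude nonzero. Any circuit that prepares $|\psi_2\rangle$ from $|0\rangle^{\otimes N}$ also prepares $|\phi\rangle_A$ on the $\tilde n$ qubits of $A$ in the same depth, with the $B$ qubits and all ancillas playing the role of additional ancillas. The known $\Omega(m)$ depth lower bound for arbitrary $m$-qubit state preparation with unlimited ancillas~\cite{Zhang.21,Sun.21_new} applied with $m = \tilde n$ then yields depth $\Omega(\log d)$.

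The main obstacle is the first step: turning the informal light-cone intuition for GHZ into a quantitative bound that remains valid in the presence of arbitrarily many ancillas, and in particular ruling out shortcuts that use pre-entangled ancillary structure to ``teleport'' correlations between distant system qubits. The second step, by contrast, is essentially a reduction to the existing dense-state lower bound and requires only the mild observation that appending qubits initialized in $|0\rangle$ cannot reduce the preparation depth below that of the unpadded dense state. Combining the two bounds gives the worst-case depth $\Omega(\max\{\log n,\log d\}) = \Omega(\log(nd))$, as claimed.
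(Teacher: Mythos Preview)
Your proposal is correct and the first half (GHZ state plus light-cone argument for $\Omega(\log n)$) coincides with the paper's proof. The second half differs: the paper does not reduce to the dense-state lower bound but instead applies the parameter-counting lemma of~\cite{Sun.21_new} directly, observing that an $L$-depth circuit of one- and two-qubit gates carries only $O(n\cdot 2^L)$ effective parameters while a $d$-sparse state needs $\Omega(d)$ parameters, giving $L=\Omega(\log(d/n))$; combining this with $\Omega(\log n)$ yields $\Omega(\log(nd))$. Your reduction to the known $\Omega(m)$ dense-state bound on $m=\lfloor\log_2 d\rfloor$ qubits is a cleaner route to $\Omega(\log d)$ and makes the final combination $\max\{\log n,\log d\}=\Omega(\log(nd))$ immediate; the paper's direct counting, on the other hand, additionally yields an ancilla lower bound (the remark after the proof that reaching depth $\Theta(\log(nd))$ requires $\Omega(d-n)$ ancillas), which your reduction does not produce. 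One small technical point: a fully dense state on $\lceil\log_2 d\rceil$ qubits can have up to $2d-1>d$ nonzero amplitudes, so either take $\tilde n=\lfloor\log_2 d\rfloor$ or invoke the paper's standing convention that $d$ is a power of two.
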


\begin{proof}
Firstly, we consider the preparation of the GHZ state $1/\sqrt{2}\left(|0\rangle^{\otimes n}+|1\rangle^{\otimes n}\right)$, which can be considered as a $d$-sparse state with any $d\geqslant2$. For initial state $|0\rangle^{\otimes n}$, there is no correlation between each qubit, while in the target GHZ state, each qubit is correlated to all other qubits. To establish the correlations between one qubit to all other $n-1$ qubits with single- and two-qubit gates, $\Omega(\log n)$ circuit depth is necessary.

Secondly, when preparing an $n$-qubit state with a $L$-depth quantum circuit using only single- and two-qubit gates, there are at most $O(n\cdot 2^L)$ effective parameters (Ref~\cite{Sun.21_new}, Lemma 28). On the other hand, specifying a $d$-sparse ($2\leqslant d\leqslant2^n$) quantum state requires no less than $\Omega(d)$ parameters. Therefore, to prepare arbitrary $d$-sparse quantum states, the circuit depth is lower bounded by $\Omega(\log(d/n))=\Omega(\log d-\log n)$.

Combining results above, the circuit lower bound is $\Omega(\log(nd))$.
\end{proof}
The derivation above also indicates that to reach the circuit depth $\Theta(\log(nd))$, at least $\Omega(d-n)$ ancillary qubits are required. If we further consider the resource for storing the values and indexes of all nonzero entries, the total space complexity is lower bounded by $\Omega(nd)$.

\section{product unitary memory (PUM) and sparse Boolean memory (SBM)}\label{sec:us_sup}
In this section,  we provide details about the data structure introduced in Lemma~\ref{lm:1} (PUM) and Lemma~\ref{lm:2} (SBM). We begin with single-word cases (word register has only one qubit) in Sec.~\ref{sec:pum_sig} (PUM) and Sec.~\ref{sec:sbm_sig} (SBM). Then, in Sec.~\ref{sec:multi}, we discuss how to generalize PUM and SBM to multi-word cases (word register has more than one qubit).

\subsection{Single-word PUM}\label{sec:pum_sig}

In this subsection, our goal is to perform the transformation
\begin{align}\label{eq:pum_sig}
\text{select}(\hat U)\sum_{k=0}^{d-1}\sum_{z=0,1}\psi_{k,z}|k\rangle|z\rangle=\sum_{k=0}^{d-1}\sum_{z=0,1}\psi_{k,z}|k\rangle \hat U(k)|z\rangle,
\end{align}
where $|k\rangle$ is the $\tilde n$-qubit basis ($d=2^{\tilde n}$) of the index register and $|z\rangle$ is the basis of the single-qubit word register. Our method can be considered as a generalization of the binary QRAM protocol~\cite{Giovannetti.08,Giovannetti.08_2,Hann.19,Hann.21} to continuous data. We will introduce the main idea of our protocol in Sec.~\ref{sec:pum_sig_m} and then provide the formal description in Sec.~\ref{sec:pum_sig_f}.

\subsubsection{Main idea}\label{sec:pum_sig_m}
The element of the layout is the \textit{router} containing four qubits. If we denote the incident qubit, routing qubit, left output qubit and right output qubit as $q_{\text{incident}},q_{\text{route}},q_{\text{left}}$ and $q_{\text{right}}$ respectively, the \textit{routing} operation is defined as
\begin{align}\label{eq:routing}
\text{ROUTE}=\text{CSWAP}(q_{\text{route}};q_{\text{incident}},q_{\text{right}})\text{NOT}(q_{\text{route}})\text{CSWAP}(q_{\text{route}};q_{\text{incident}},q_{\text{left}})\text{NOT}(q_{\text{route}}).
\end{align}
By applying the routing operation, the state of $q_{\text{incident}}$ is transferred to $q_{\text{left}}$ ($q_{\text{right}}$), if the routing qubit is at state $|0\rangle$ ($|1\rangle$).

The layout of single-word PUM is shown in Fig.~\ref{fig:qram}. Except for index register and word register, we also introduce another pointer qubit initialized as $|1\rangle$. The layout contains ($2d-1$) routers arranged as a $\tilde n$-layer binary tree. Each left (right) output qubit serves as the incident qubit of its left (right) child. Each left and right output qubit of the node at the leaf layer connects to a \textit{memory cell}. Each memory cell contains a pointer-holder qubit and a word-holder qubit initialized as $|0\rangle$. The $k$th memory cell stores a single qubit unitary $U(k)$ applied at the word-holder qubit and controlled by the pointer-holder qubit.  

\begin{figure}
	\centering
	\includegraphics[width=.75\columnwidth]{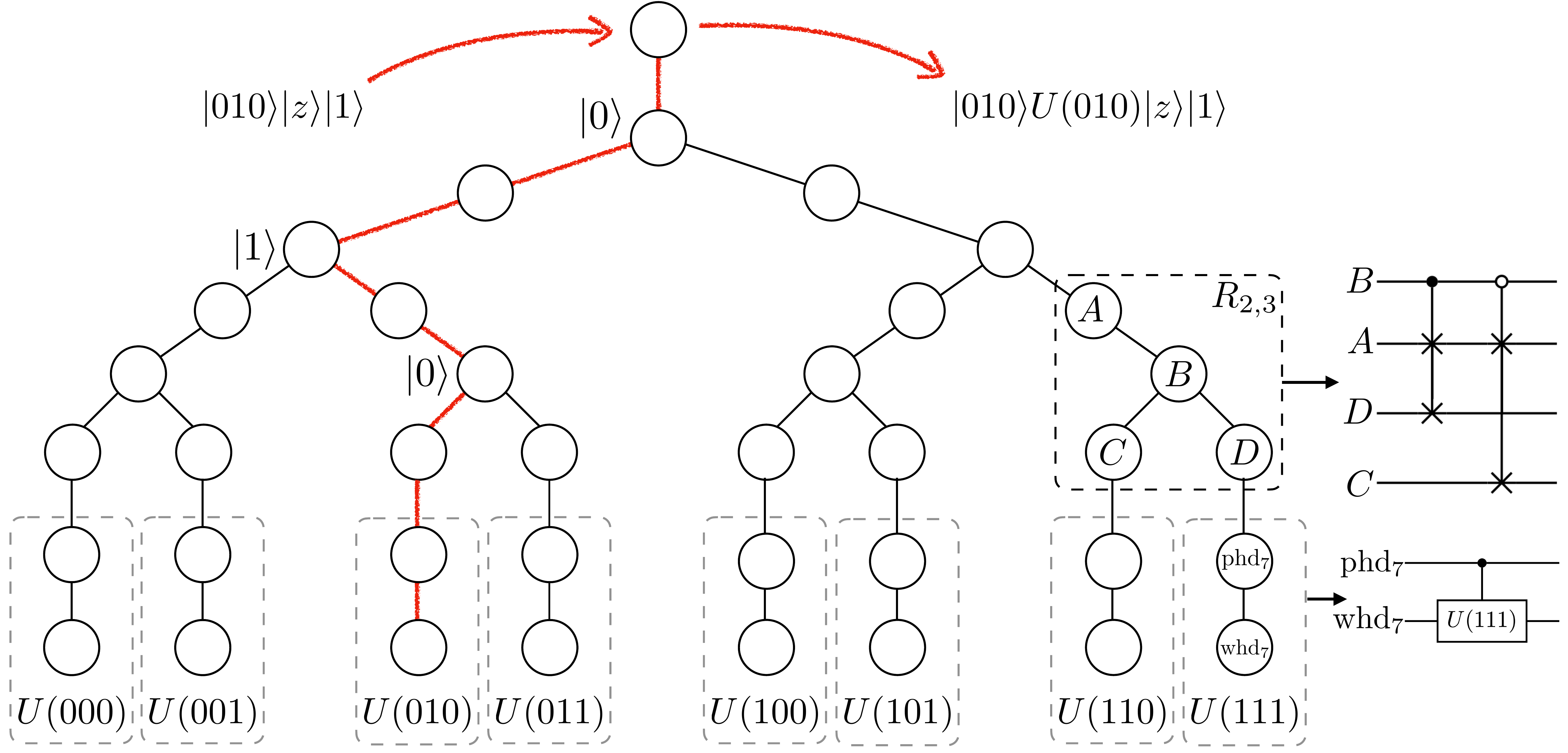}
	\caption{Sketch of single-word PUM. Main panel is the qubit layout of the PUM with $d=8$ ($\tilde n=3$). Each circle represents a qubit, and solid line represents the connection between qubits. Dashed boxes with $4$ qubits inside represents the router with label $R_{2,3}$. $A$, $B$, $C$ and $D$ represents the incident qubit, routing qubit, left output qubit and right output qubit respectively. The corresponding circuit for routing operation defined in Eq.~\eqref{eq:routing} is also given in its right side. Dashed boxes with $2$ qubits inside represents the memory cells.  An example is given for the last memory cell storing $U(111)$.  phd$_7$ represents pointer-holder qubit and whd$_7$ represents word-holder qubit. In the query stage, a single qubit gate $U(111)$ is applied on the word-holder qubit whd$_7$ controlled by the pointer-holder qubit phd$_7$ (line 2 of Alg.~\ref{alg:route_full}). If the index register, word register and pointer qubit are at state $|010\rangle$, $|z\rangle$ and $|1\rangle$ respectively, state $|z\rangle|1\rangle$ is routed in the word holder and pointer holder of the third memory cell. After route out, the output state becomes $|010\rangle U(010)|z\rangle|1\rangle$. The path for route-in and route-out is represented by red line.}
	\label{fig:qram}
\end{figure}

Our method has three stages. In the route-in stage, the state of index register, word register and pointer qubit are sent into the binary tree sequentially. ROUTE defined by Eq.~\eqref{eq:routing} transfers the state of the incident qubit of a node to the incident qubit of its left (right) child, conditioned on the routing qubit is at state $|0\rangle$ ($|1\rangle$). The state of the $l$th index qubit is finally transferred to one of the routing qubits at the $l$th layer. The states of the word register and the pointer qubit travels to the word-holder and pointer-holder qubits of the $k$th memory cell, conditioned on the intup index register at state $|k\rangle$.

In the query stage, at the $k$th memory cell, we implement controlled-$\hat U(k)$ with pointer-holder as controlled qubit and the word-holder as target qubit. This gives either the transformation $|z\rangle|1\rangle\rightarrow\hat U(k)|z\rangle|1\rangle$ or $|00\rangle\rightarrow|00\rangle$.

 In the route-out stage, we perform the inverse of the route-in stage. The index register, word register and the pointer qubit are transferred to the state $\sum_{k=0}^{d-1}\psi_{k,z}|k\rangle U(k)|z\rangle|1\rangle$, while all other qubits are uncomputed. By tracing out the pointer qubit, we obtain the target state.

An example is given in Fig.~\ref{fig:qram}. The index and word register are initially at $|010\rangle$ and $|z\rangle$ respectively. Together with the pointer qubit, the input state is  $|010\rangle|z\rangle|1\rangle$. After the route-in stage, the quantum state $|z\rangle|1\rangle$ of the word register and pointer qubit are sent into the memory cell storing $U(010)$. After the query stage, the state at the memory cell storing $U(010)$ becomes $U(010)|z\rangle|1\rangle$ while all other memory cells remain unchanged (at state $|0\rangle|0\rangle$). After the route-out stage, memory cell and the binary tree are uncomputed, while the output state of the index register, word register and pointer qubit becomes $|010\rangle U(010)|z\rangle|1\rangle$ as expected. 

\subsubsection{Formal description}\label{sec:pum_sig_f}
We denote the $j$th router at the $l$th layer as $R_{l,j}$, and denote the pointer-holder qubit and word-holder qubit at the $k$th memory cell as phd$_k$ and whd$_k$, respectively. For $0\leqslant k\leqslant d-1$, we also denote the the left output qubit of the $R_{\tilde{n},k/2}$ and the right output qubit of the $R_{\tilde{n},(k-1)/2}$  as $O_{k}$. O$_k$ can also be considered as the $k$th leaf node of the router tree.

We define several operations required by our protocol as follow:

\begin{itemize}
\item Input$(m)$: swap gate between the $m$th qubit at the index register and the incident qubit of $R_{0,0}$
\item Input(pointer): swap gate between pointer qubit, and the incident qubit of $R_{0,0}$
\item Input(word): swap gate between the qubit at word register and the incident qubit of $R_{0,0}$
\item Route($l$): routing operation defined in Eq.~\eqref{eq:routing} for all routers $R_{l,j}$ with $j\in\{0,2^l-1\}$
\item Set($l$): swap gate between the incident qubit and routing qubit of $R_{l,j}$ for all $j\in\{0,1,\cdots, 2^l-1\}$
\item Output(pointer): swap gate between O$_k$ and the pointer holder $\text{phd}_k$ for all $k\in\{0,1 \cdots d-1\}$
\item Output(word): swap gate between O$_k$ and the word-holder qubit $\text{whd}_k$ for all $k\in\{0,1 \cdots d-1\}$~\cite{fn} 
\item CU(phd$_k$;whd$_k$): Controlled-$U(k)$ with $\text{phd}_k$ as controlled qubit and $\text{whd}_k$ as target qubit for all $k\in\{0,1 \cdots d-1\}$
\end{itemize}
Note that each operation above can be implemented with circuit depth $O(1)$.
The pseudo code for the full process is given in Alg.~\ref{alg:route_full}, and the route-in stage is given in Alg.~\ref{alg:route_route}, where we have assumed that $\tilde n/6$ is an integer. We note that the route-in protocol is first given in the Appendix A of ~\cite{Hann.21}.

\begin{algorithm} [H]
\caption{: select($\hat U$) for single-word case}
\label{alg:route_full}
\begin{algorithmic}[1]

\STATE Route-in
\STATE CU$\left(\text{phd}_k; \text{whd}_k\right)$ for all $k\in\{0,1\cdots d-1\}$
\STATE inverse of Route-in

\end{algorithmic}
\end{algorithm}

\begin{algorithm} [H]
\caption{: Route-in (Appendix A of ~\cite{Hann.21})}
\label{alg:route_route}
\begin{algorithmic}[1]

\STATE \textbf{for} $L=1$ to $\tilde n/3$

\STATE\quad Input($3L-2$); Route($2l$) for all $1\leqslant l\leqslant (L-1)$

\STATE\quad  Route($2l-1$) for all $1\leqslant l\leqslant (L-1)$; Set($2L-1$)

\STATE\quad  Input($3L-1$); Route($2l$) for all $1\leqslant l\leqslant (L-1)$

\STATE\quad  Route($2l-1$) for all $1\leqslant l\leqslant L$

\STATE\quad  Input($3L$); Route($2l$) for all $1\leqslant l\leqslant (L-1)$; Set($2L$)

\STATE\quad  Route($2l-1$) for all $1\leqslant l\leqslant L$

\STATE \textbf{end}

\STATE Input(pointer); Route($2l$) for all $1\leqslant l\leqslant \tilde n/3$

\STATE Route($2l-1$) for all $1\leqslant l\leqslant \tilde n/3$; Set($2\tilde n/3+1$)

\STATE Input(word); Route($2l$) for all $1\leqslant l\leqslant \tilde n/3$

\STATE \textbf{for} $L=\tilde n/3+1$ to $\tilde n/2-1$

\STATE\quad  Route($2l-1$) for all $(3L-\tilde n-2)\leqslant l\leqslant L$

\STATE\quad  Route($2l$) for all $(3L-\tilde n-2)\leqslant l\leqslant (L-1)$; Set($2L$)

\STATE\quad  Route($2l-1$) for all $(3L-\tilde n-1)\leqslant l\leqslant L$

\STATE\quad  Route($2l$) for all $(3L-\tilde n-1)\leqslant l\leqslant L$

\STATE\quad Route($2l-1$) for all $(3L-\tilde n)\leqslant l\leqslant L$; Set($2L+1$);

\STATE\quad  Route($2l$) for all $(3L-\tilde n)\leqslant l\leqslant L$;

\STATE \textbf{end}

\STATE Route($2l-1$) for all $(\tilde n/2-2)\leqslant l\leqslant \tilde n/2$

\STATE Route($2l$) for all $(\tilde n/2-3)\leqslant l\leqslant (\tilde n/2-1)$; Set($\tilde n$)

\STATE Route($2l-1$) for all $(\tilde n/2-1)\leqslant l\leqslant \tilde n/2$

\STATE Route($\tilde n-2$); Output(pointer)

\STATE Route($\tilde n-1$)

\STATE Output(word)

\end{algorithmic}
\end{algorithm}

\subsection{Single-word sparse Boolean memory (SBM)}\label{sec:sbm_sig}
In this subsection, our goal is to perform the transformation
\begin{align}\label{eq:sbm_in}
\sum_{k=0}^{2^n-1}\sum_{z=0,1}\psi_k|k\rangle|z\rangle\longrightarrow\sum_{k=0}^{2^n-1}\sum_{z=0,1}\psi_k|k\rangle|z\oplus f(k)\rangle,
\end{align}
where $|k\rangle$ is the state of the $n$-qubit index register, and $|z\rangle$ is the state of single-qubit word register, and $f(k)$ is an $s$-sparse Boolean function. Without loss of generality, we assume that both $L_{s}\equiv\log_2s$ and $L_{n}\equiv\log_2n$ are integers. If $L_n$ is not, we add extra qubits with state $|0\rangle$ to the index register until the assumption is satisfied.

\begin{figure}
	\centering
	\includegraphics[width=.82\columnwidth]{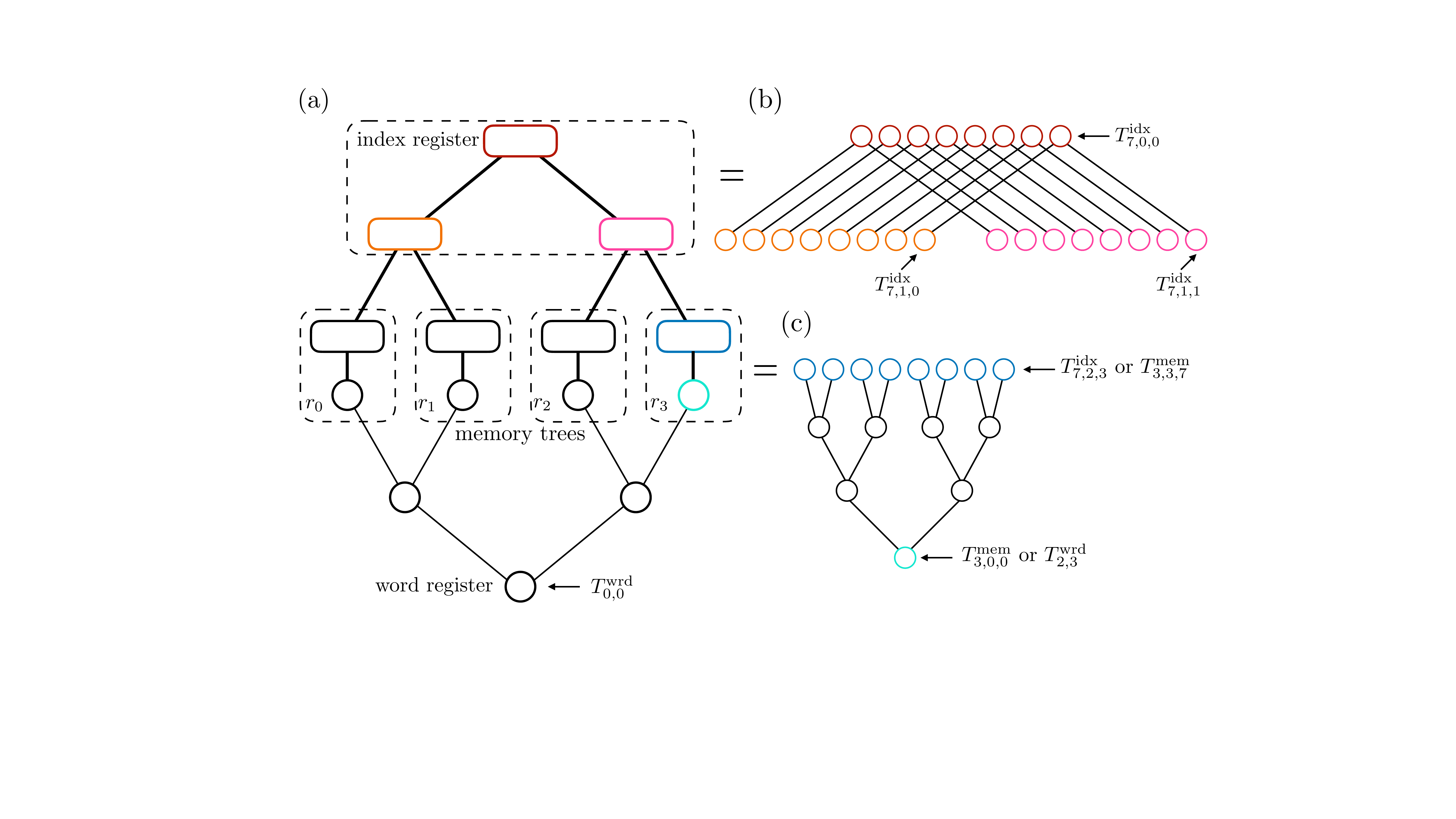}
	\caption{Qubit layout of single-word SBM with $n=8$ and $s=4$. Each circle represents a qubit. Entangling gates are applied at qubit pairs connected by light lines. Rounded rectangle represents a cluster of qubits, and bold line represents a cluster of light lines. (a) general layout of SBM. Index register corresponds to the $8$-qubit system denoted by brown rounded rectangle (or equivalently, brown circles in (b)). Word register is the single qubit with label  $T^{\text{wrd}}_{0,0}$ at the bottom. The layout contains $s=4$ memory trees. Each memory tree is denoted by a bashed box with a rectangle and a circle inside. (b) Zoom-in of the first and second layer of $T^{\text{idx}}_{l}$ for $0\leqslant l\leqslant7$. The labels of the $8$-qubit index register are $T^{\text{idx}}_{0,0,0}, T^{\text{idx}}_{1,0,0},\cdots, T^{\text{idx}}_{7,0,0}$ respectively. (c) Zoom-in of the memory cell storing $r_3$. For the sake of brevity, the layer $T^{\text{mem}}_{1}$ and $T^{\text{mem}}_{2}$, corresponding to black circles have not been shown in (a). Index holder corresponds to the layer with blue color, i.e. qubits at layer $T^{\text{mem}}_{3,3}$. Word holder corresponds to the qubit with cyan color, i.e., $T^{\text{cell}}_{3,0,0}$. See Sec.~\ref{sec:sbm_sig} for more details.}
	\label{fig:sbm1}
\end{figure}
\begin{figure}
	\centering
	\includegraphics[width=.6\columnwidth]{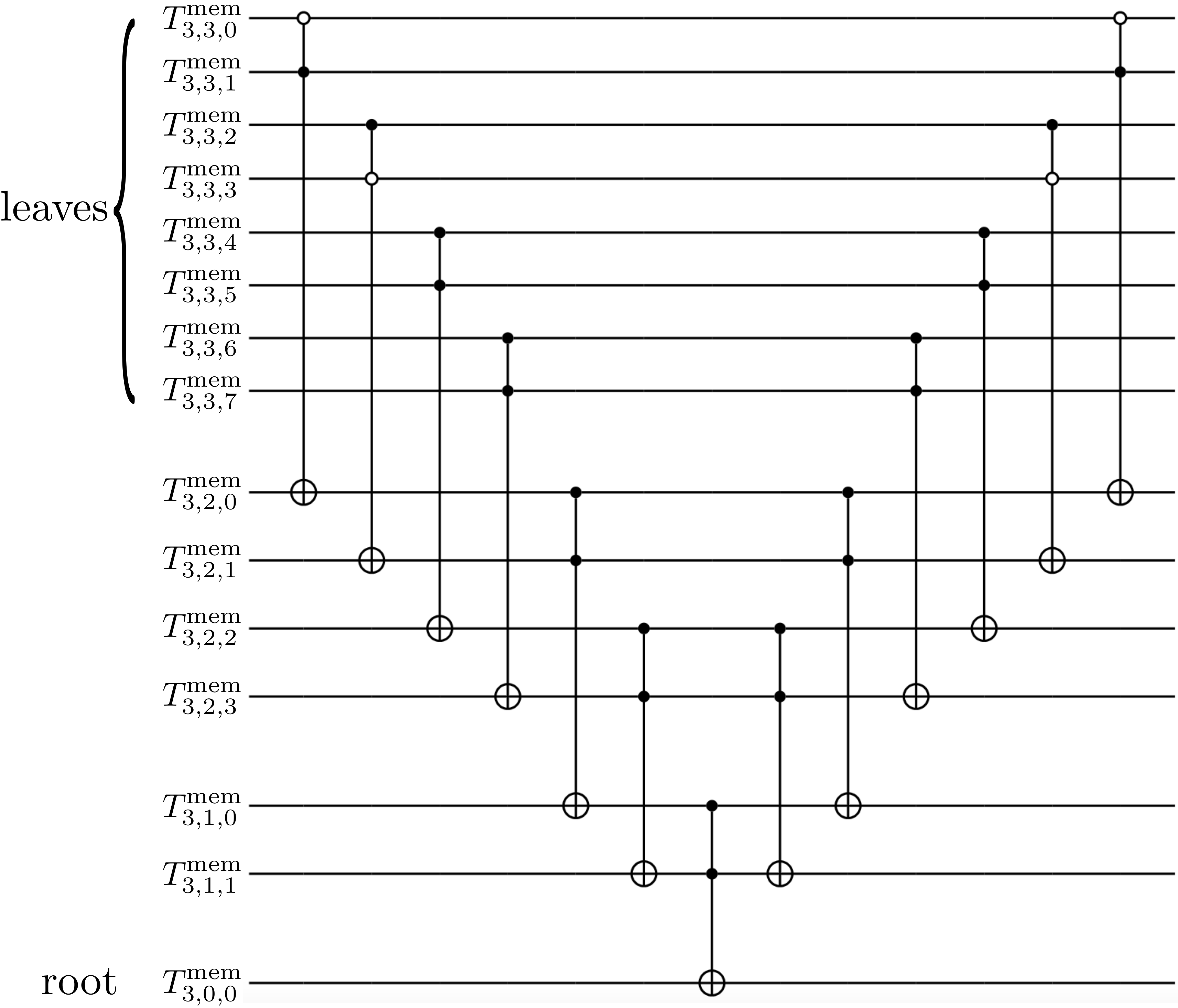}
	\caption{Quantum circuit of Toffoli$(r_3,T^{\text{mem}}_{3})$ at Alg.~\ref{alg:tof}, where we have taken $r_3=01101111$ as an example. The circuit performs the transformation $|k\rangle_{T^{\text{mem}}_{3,3}}|z\rangle_{T^{\text{mem}}_{3,0}}\rightarrow|k\rangle_{T^{\text{mem}}_{3,3}}|z\oplus (k=r_3)\rangle_{T^{\text{mem}}_{3,0}}$. }	\label{fig:sbm2}
\end{figure}

The qubit layout is illustrated in Fig.~\ref{fig:sbm1}. For each $l\in\{0,1\cdots n-1\}$, we introduce a $L_s$-layer binary tree (index tree) denoted as $T^{\text{idx}}_{l}$. We denote the $m$th layer of $T^{\text{idx}}_{l}$ as $T^{\text{idx}}_{l,m}$, and the $w$th node of $T^{\text{idx}}_{l,m}$ as $T^{\text{idx}}_{l,m,w}$. The $(l+1)$th qubit of index register serves as the root $T^{\text{idx}}_{l,0,0}$.
Let $\mathcal{S}_f=\{r_0,r_1,\cdots, r_{s-1}\}$ be a set containing all indexes satisfying $f(r_w)=1$. Our structure has totally $s$ memory cells. The $w$th cell contains a $L_n$-layer binary tree (memory tree) $T^{\text{mem}}_w$. Similarly, the $m$th layer of $T^{\text{mem}}_w$ is $T^{\text{mem}}_{w,m}$, and the $j$th node of $T^{\text{mem}}_{w,m}$ is $T^{\text{mem}}_{w,m,j}$. 
 The memory trees and index trees share their leaves. More specifically, we have $T^{\text{mem}}_{w,L_n,l}=T^{\text{idx}}_{l,L_s,w}$. We then introduce another $L_s$-layer binary tree (word tree) $T^{\text{wrd}}$. Its $w$th node at the $m$th layer is denoted as $T^{\text{wrd}}_{m,w}$. It connects the roots of all memory trees to the word register, i.e. $T^{\text{wrd}}_{L_s,w}=T^{\text{mem}}_{w,0,0}$ and $T^{\text{wrd}}_{0,0}$ is the word register.

The algorithm is shown in Alg.~\ref{alg:sbm}.
The first step is line $1$ of Alg.~\ref{alg:sbm}, which fanouts the state of index register $|k\rangle$ to the leaf layer of each memory tree. The quantum state becomes
\begin{align}
|\psi_{\text{step 1}}^{\text{sbm}}\rangle=\sum_{k=0}^{2^n-1}\sum_{z=0,1}\psi_{k,z}|k\rangle_{\text{ir}}|z\rangle_{\text{wr}}\bigotimes_{w=0}^{s-1}|k\rangle_{T_{w,L_{n}}^{\text{mem}}}|0\rangle_{T_{w,0}^{\text{mem}}}.
\end{align}
Here ``ir'' and ``wr'' represent the index register and word registers respectively.

The second step is line $2$ of Alg.~\ref{alg:sbm}. We apply the $n$-Toffoli gates defined in Alg.~\ref{alg:tof} (see also Fig.~\ref{fig:sbm2}). The transformation $|k\rangle_{T_{w,L_{n}}^{\text{mem}}}|0\rangle_{T_{w,0}^{\text{mem}}}\rightarrow|k\rangle_{T_{w,L_{n}}^{\text{mem}}}| k=r_w\rangle_{T_{w,0}^{\text{mem}}}$ is performed in each memory tree. So the state becomes
 \begin{align}
|\psi_{\text{step 2}}^{\text{sbm}}\rangle=\sum_{k=0}^{2^n-1}\sum_{z=0,1}\psi_{k,z}|k\rangle_{\text{ir}}|z\rangle_{\text{wr}}\bigotimes_{w=0}^{s-1}|k\rangle_{T_{w,L_{n}}^{\text{mem}}}|k=r_w\rangle_{T_{w,0}^{\text{mem}}}.
\end{align}
The third step is line $3$ of Alg.~\ref{alg:sbm}. There is at most one activated qubit at the leaf layer of $T^{\text{wrd}}$, so the parallel CNOT gate(defined in Alg.~\ref{alg:pcnot}) performs the transformation $|z\rangle_{\text{wr}}\rightarrow|z \oplus \prod_{w=0}^{s-1}(k=r_w)\rangle_{\text{wr}}=|z \oplus f(k)\rangle_{\text{wr}}$, and the quantum state becomes
 \begin{align}
|\psi_{\text{step 3}}^{\text{sbm}}\rangle=\sum_{k=0}^{2^n-1}\sum_{z=0,1}\psi_{k,z}|k\rangle_{\text{ir}}|z\oplus f(k)\rangle_{\text{wr}}\bigotimes_{w=0}^{s-1}|k\rangle_{T_{w,L_{n}}^{\text{mem}}}|k=r_w\rangle_{T_{w,0}^{\text{mem}}}.
\end{align}
The fourth step is line 4,5 of Alg.~\ref{alg:sbm}, aiming at uncomputing $T^\text{cell}_{w}$. After this step, the final state becomes
 \begin{align}
|\psi_{\text{step 4}}^{\text{sbm}}\rangle=\sum_{k=0}^{2^n-1}\sum_{z=0,1}\psi_{k,z}|k\rangle_{\text{ir}}|z\oplus f(k)\rangle_{\text{wr}}\bigotimes_{w=0}^{s-1}|0{\rangle^{\otimes n}}_{T_{w,L_{n}}^{\text{mem}}}|0\rangle_{T_{w,0}^{\text{mem}}}.
\end{align}
By tracing out the memory trees, we obtain the target state.

Step 1 and 3 have circuit depth $O(\log s)$, step 2 have circuit depth $O(\log n)$, and step 4 have circuit depth $O(\log s+\log n)$. Therefore, the total circuit depth of single-word SBM is $O(\log(sn))$.

\begin{algorithm} [H]
\caption{: select($f$) for single-word case}
\label{alg:sbm}
\begin{algorithmic}[1]
\STATE Fanout$\left(T^{\text{idx}}_l\right)$ for all $l\in\{0,1,\cdots,n-1\}$
\STATE Toffoli$\left(r_w,T^{\text{mem}}_w\right)$ for all $w\in\{0,1,\cdots,s-1\}$
\STATE PCNOT$\left(T^{\text{orc}}\right)$
\STATE Toffoli$\left(r_w, T^{\text{mem}}_w\right)$ for all $k\in\{0,1,\cdots,s-1\}$
\STATE Fanout$\left(T^{\text{idx}}_l\right)$ for all $l\in\{1,2,\cdots,n\}$
\end{algorithmic}
\end{algorithm}

\begin{algorithm} [H]
\caption{: Toffoli$\left(r_w,T_w^{\text{mem}}\right)$ }
\label{alg:tof}
\begin{algorithmic}[1]
\STATE  NOT$\left(T^{\text{mem}}_{w,L_n,j}\right)$ for all $r_{w,j}=0$  \quad\quad $\#$ $r_{w,j}$ represents the $(j+1)$th digit of $r_w$

\STATE  CCNOT$\left(T^{\text{mem}}_{w,L_n,2j} , T^{\text{mem}}_{w,L_n,2j+1}; T^{\text{mem}}_{w,L_n-1,j}\right)$ for all $j\in\{0,1,\cdots n/2-1\}$

\STATE  NOT$\left(T^{\text{mem}}_{w,L_n,j}\right)$ for all $r_{w,j}=0$  \quad\quad $\#$ $r_{w,j}$ represents the $(j+1)$th digit of $r_w$

\STATE \textbf{for} $\tilde{m}=1$ to $L_n-1$:
\STATE \quad$m=L_n-\tilde m$

\STATE \quad CCNOT$\left(T^{\text{mem}}_{w,m,2j} , T^{\text{cell}}_{w,m,2j+1}; T^{\text{mem}}_{w,m-1,j}\right)$ for all $j\in\{0,1,\cdots 2^{m-1}-1\}$

\STATE \textbf{end}

\STATE \textbf{for} $\tilde{m}=1$ to $L_n-2$:
\STATE \quad$m=L_n-\tilde m$

\STATE \quad CCNOT$\left(T^{\text{mem}}_{w,m,2j} , T^{\text{cell}}_{w,m,2j+1}; T^{\text{mem}}_{w,m-1,j}\right)$ for all $j\in\{0,1,\cdots 2^{m-1}-1\}$

\STATE \textbf{end}

\STATE  NOT$\left(T^{\text{mem}}_{w,L_n,j}\right)$ for all $r_{w,j}=0$  \quad\quad $\#$ $r_{w,j}$ represents the $(j+1)$th digit of $r_w$

\STATE  CCNOT$\left(T^{\text{mem}}_{w,L_n,2j},T^{\text{mem}}_{w,L_n,2j+1}; T^{\text{mem}}_{w,L_n-1,j}\right)$ for all $j\in\{0,1,\cdots n/2-1\}$

\STATE  NOT$\left(T^{\text{mem}}_{w,L_n,j}\right)$ for all $r_{w,j}=0$

\end{algorithmic}
\end{algorithm}

\subsection{Multi-word cases for PUM and SBM} \label{sec:multi}

In below, we discuss the generalization of PUM and SBM to multi-word cases. The basic idea is to implement fanout (Alg.~\ref{alg:fan}) to obtain multiple ``copies'' of index register, and then query the oracles for different words in parallel, and then implement fanout again for uncomputation.
We take PUM for Lemma~\ref{lm:1} as an example, and same idea can be straightforwardly applied to SBM for Lemma~\ref{lm:2}.

For PUM with $n\geqslant1$, we introduce another $n$ registers, and denote the $l$th register as $I_l$. Each $I_l$ contains $\tilde n$ ancillary qubits with initial state $|0\rangle^{\otimes\tilde n}$. The basis of word register is $|z\rangle\equiv|z_n\rangle\cdots|z_2\rangle|z_1\rangle=\bigotimes_{l=n}^1|z_{l}\rangle$. We can represent the input state of the PUM as
\begin{align}
|\tilde\psi_{\text{input}}^{\text{sbm}}\rangle=\sum_{k=0}^{d-1}\sum_{z=0}^{2^n-1}\psi_{k,z}|k\rangle\bigotimes_{l=n}^{1}|0\rangle^{\otimes \tilde n}_{I_l}|z_{l}\rangle.
\end{align}

The entire query process is illustrate in Fig~\ref{fig:para}.
 In the first step, we fanout the state of index register to each $I_l$, and obtain
\begin{align}
|\tilde\psi_{\text{step 1}}^{\text{sbm}}\rangle=\sum_{k=0}^{d-1}\sum_{z=0}^{2^n-1}\psi_{k,z}|k\rangle\bigotimes_{l=n}^{1}|k\rangle_{I_l}|z_{l}\rangle.
\end{align}
In the second step, we query single-word PUMs select($\hat U_l$) for all $1\leqslant l\leqslant n$ in parallel with $|k\rangle_{I_l}|z_{l}\rangle$ as the input state. The output state is
\begin{align}
|\tilde\psi_{\text{step 2}}^{\text{sbm}}\rangle=&\sum_{k=0}^{d-1}\sum_{z=0}^{2^n-1}\psi_{k,z}|k\rangle\bigotimes_{l=n}^{1}\text{select}(\hat U_l)|k\rangle_{I_l}|z_{l}\rangle\\
=&\sum_{k=0}^{d-1}\sum_{z=0}^{2^n-1}\psi_{k,z}|k\rangle\bigotimes_{l=n}^{1}|k\rangle_{I_l}\hat U_l(k)|z_{l}\rangle.
\end{align}

\begin{figure}[h]
	\centering
	\includegraphics[width=.6\columnwidth]{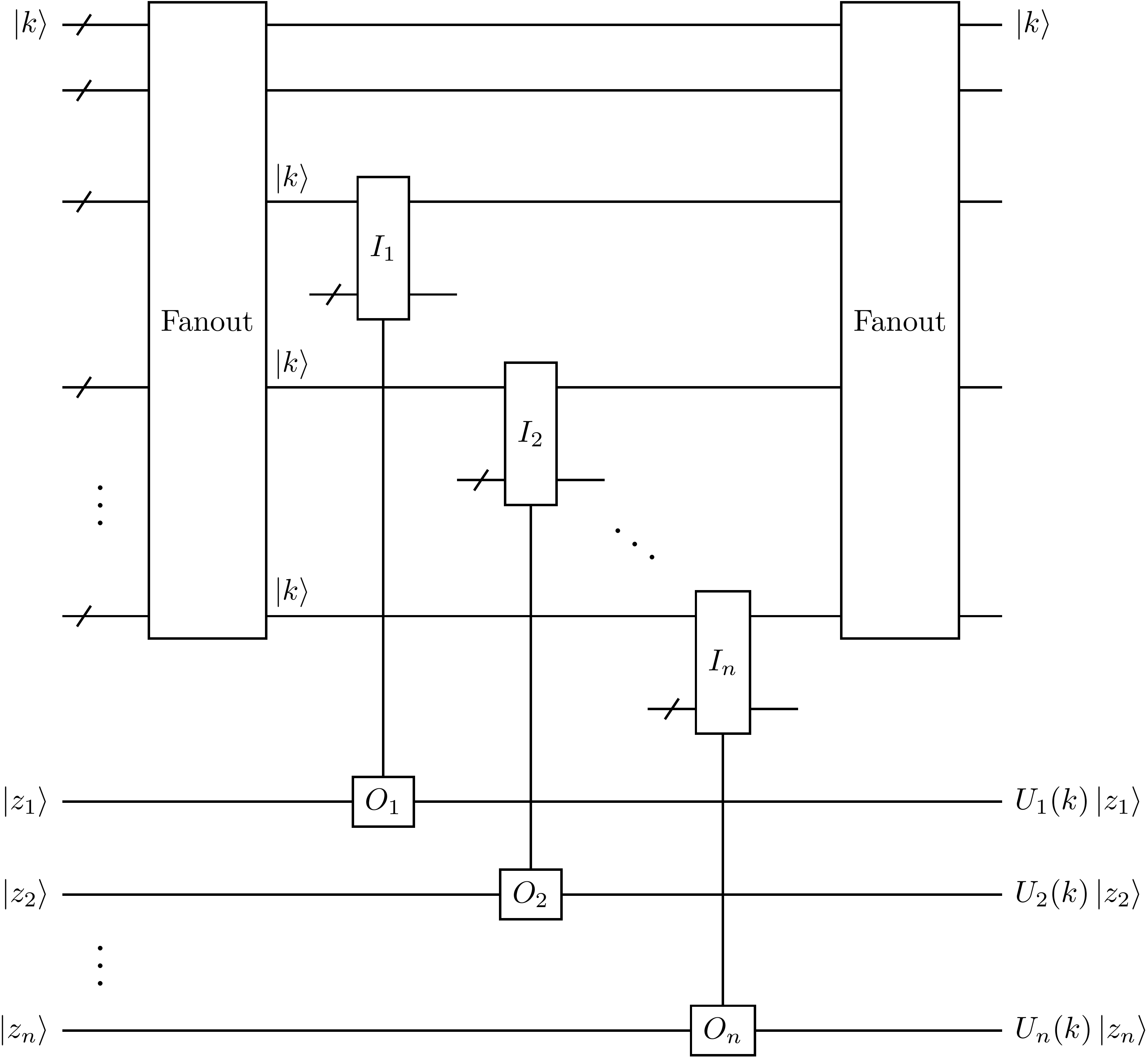}
	\caption{The query of multi-word PUM is realized by querying many single-word PUM in parallel. $I_l$ represents the index register part of the single-word oracle select($\hat U_l$), and $O_l$ represents the word register part of the single-word oracle select($\hat U_l$).}
	\label{fig:para}
\end{figure}

Finally, we repeat the fanout operation in the first step to uncompute $I_l$. The state then becomes
\begin{align}
|\tilde\psi_{\text{step 3}}^{\text{sbm}}\rangle=&\sum_{k=0}^{d-1}\sum_{z=0}^{2^n-1}\psi_{k,z}|k\rangle\bigotimes_{l=n}^{1}|0\rangle^{\otimes \tilde n}_{I_l}\hat U_l(k)|z_{l}\rangle.
\end{align}
Recall that $U(k)=\bigotimes_{l=n}^{1} U_{l}(k)$. By tracing out registers $I_l$, we obtain
\begin{align}
|\tilde\psi_{\text{output}}^{\text{sbm}}\rangle=&\sum_{k=0}^{d-1}\sum_{z=0}^{2^n-1}\psi_{k,z}|k\rangle\bigotimes_{l=n}^{1}\hat U_l(k)|z_{l}\rangle\\
=&\sum_{k=0}^{d-1}\sum_{z=0}^{2^n-1}\psi_{k,z}|k\rangle\hat U(k)|z\rangle,
\end{align}
which is the expected output. The fanout operation has $O(\log n)$ circuit depth, and the parallel query of each select$(\hat U_l)$ has circuit depth $O(\log d)$. So the total circuit depth is $O(\log(nd))$. Because each select$(\hat U_l)$ requires $O(d)$ ancillary qubits and they are queried in parallel, select$(\hat U_l)$ requires totally $O(nd)$ ancillary qubits. So Lemma.~\ref{lm:1} holds true.

We can apply the similar method to generalize SBM for $\tilde{n}>1$. In that case, the total circuit depth is $O(\log(ns\tilde{n}))$ and $O(ns\tilde{n})$ ancillary qubits are required. So Lemma.~\ref{lm:2} holds true.

\section{Hamiltonian simulation based on qubitizaiton}\label{sec:qubit_sup}
According to Ref.~\cite{Low.19}, Hamiltonian simulation can be realized with
block-encoding.

\begin{lemma}[Theorem 1 in Ref.~\cite{Low.19}]\label{lm:hslcu}
Let $\left(\langle G|\otimes \hat{\mathbb{I}}_{L}\right)\hat{U}\left(|G\rangle\otimes \hat{\mathbb{I}}_{L}\right)=\hat{H}/\alpha\in\mathbb{C}^{2^n\otimes2^{n}}$ be Hermitian for some unitary $\hat{U}\in\mathbb{C}^{(P2^n)\times(P2^n)}$ and some state preparation unitary $\hat{G}|0\rangle=|G\rangle\in\mathbb{C}^{P}$ and $\alpha>0$. Then $e^{-i\hat{H}t}$ can be simulated for time $t$, precision $\varepsilon$ in spectral norm, and failure probability $O(\varepsilon)$, using at most $\Theta(\alpha t+\log(1/\varepsilon))$ queries to controlled-$\hat{G}$, controlled-$\hat{U}$, and their inverses, $O\left(\left(\alpha t+\log(1/\varepsilon)\right)\log(P)\right)$ additional two-qubit quantum gates, and $O(n+\log(P))$ extra qubits.
\end{lemma}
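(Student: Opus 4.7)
The plan is to reprove the Low--Chuang qubitization theorem in three stages: (i) build a walk operator $\hat W$ from $\hat U$ and $\hat G$ whose spectrum encodes $\arccos(\hat H/\alpha)$, (ii) approximate $e^{-i\hat H t}$ by a low-degree polynomial in $\hat H/\alpha$ via the Jacobi--Anger expansion, and (iii) synthesize that polynomial by quantum signal processing (QSP), then tally the resources.

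First I would construct $\hat W \equiv (\hat R_G \otimes \hat{\mathbb{I}}_n)\,\hat U$, where $\hat R_G \equiv \hat G\,(2|0\rangle\langle 0|-\hat{\mathbb{I}}_L)\,\hat G^\dagger$ is the reflection about $|G\rangle$ on the $L=\lceil\log_2 P\rceil$-qubit block-encoding register. Diagonalising $\hat H=\sum_k \lambda_k |\lambda_k\rangle\langle\lambda_k|$, the block-encoding hypothesis gives
\begin{equation}
\hat U\bigl(|G\rangle\otimes|\lambda_k\rangle\bigr)=\frac{\lambda_k}{\alpha}|G\rangle|\lambda_k\rangle+\sqrt{1-\frac{\lambda_k^2}{\alpha^2}}\,|G_\perp^{(k)}\rangle,
\end{equation}
for some state $|G_\perp^{(k)}\rangle$ whose first $L$ qubits are orthogonal to $|G\rangle$. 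Following the qubitization argument of Ref.~\cite{Low.19}, I would show that each two-dimensional subspace $\mathcal{V}_k\equiv\mathrm{span}\{|G\rangle|\lambda_k\rangle,|G_\perp^{(k)}\rangle\}$ is $\hat W$-invariant and that $\hat W|_{\mathcal{V}_k}$ is an $SU(2)$ rotation by angle $\pm\arccos(\lambda_k/\alpha)$. This spectral decomposition is the main obstacle: it requires verifying invariance under $\hat U,\hat U^\dagger$ and $\hat R_G$ separately so that $\hat W$ globally decomposes into two-dimensional rotation blocks whose angles track the eigenvalues of $\hat H/\alpha$.

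Second, I would approximate $e^{-i\hat H t}$ by a polynomial in $\hat H/\alpha$ of degree $K=\Theta(\alpha t+\log(1/\varepsilon))$. Setting $\lambda/\alpha=\cos\theta$, the Jacobi--Anger identity
\begin{equation}
e^{-i\alpha t\cos\theta}=J_0(\alpha t)+2\sum_{m=1}^{\infty}(-i)^m J_m(\alpha t)\cos(m\theta)
\end{equation}
is truncated at order $K$. The tail bound $|J_m(\tau)|\leqslant (e\tau/2m)^m$ for $m>e\tau/2$ yields spectral-norm truncation error $O(\varepsilon)$. Splitting real and imaginary parts produces two definite-parity Chebyshev polynomials $P_{\cos},P_{\sin}$ of degree at most $K$, each bounded by $1$ on $[-1,1]$, whose complex combination approximates $e^{-i\hat H t}$.

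Third, I would synthesize $P_{\cos}-iP_{\sin}$ on the spectrum of $\hat W$ using QSP: any such bounded definite-parity polynomial of degree $K$ is realised by interleaving $K$ controlled walk operators (and their inverses) with $K+1$ single-qubit phased rotations acting on one extra signal ancilla. Combining the cosine and sine blocks by one LCU ancilla and a single round of oblivious amplitude amplification yields $e^{-i\hat H t}$ to precision $\varepsilon$ with failure probability $O(\varepsilon)$. Resource tally: each $\hat W$ query costs one controlled-$\hat U$, one controlled-$\hat U^\dagger$, one controlled-$\hat G$, one controlled-$\hat G^\dagger$, and a multi-controlled $Z$ on $L$ qubits costing $O(\log P)$ two-qubit gates. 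Over $\Theta(\alpha t+\log(1/\varepsilon))$ QSP iterations this gives the claimed $\Theta(\alpha t+\log(1/\varepsilon))$ queries to controlled-$\hat G,\hat U$ and their inverses, $O\bigl((\alpha t+\log(1/\varepsilon))\log P\bigr)$ extra two-qubit gates, and an extra-qubit overhead consisting of the $L$-qubit block-encoding register plus the $O(1)$ QSP and LCU ancillas, totalling $O(n+\log P)$.
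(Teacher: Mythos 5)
The paper does not actually prove this statement: it is imported verbatim as Theorem 1 of Ref.~\cite{Low.19} and used as a black box in the proof of Theorem~\ref{th:hs}, so there is no in-paper argument to compare against. Your proposal is a reconstruction of the original Low--Chuang proof, and stages (ii) and (iii) --- the Jacobi--Anger truncation at degree $\Theta(\alpha t+\log(1/\varepsilon))$ and the QSP synthesis with the stated query, gate, and ancilla tally --- follow that proof correctly.

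Stage (i), however, has a genuine gap as written. For a general unitary $\hat U$ satisfying only the block-encoding hypothesis, the two-dimensional subspaces $\mathcal{V}_k=\mathrm{span}\{|G\rangle|\lambda_k\rangle,|G_\perp^{(k)}\rangle\}$ are \emph{not} invariant under $\hat W=(\hat R_G\otimes \hat{\mathbb{I}}_{n})\hat U$: the hypothesis fixes $\hat U|G\rangle|\lambda_k\rangle$ but says nothing about $\hat U|G_\perp^{(k)}\rangle$, which can have components outside $\mathcal{V}_k$ (and outside $\bigoplus_k\mathcal{V}_k$ altogether). The verification you defer --- ``invariance under $\hat U,\hat U^\dagger$ and $\hat R_G$ separately'' --- therefore fails unless $\hat U$ is self-inverse or otherwise specially structured. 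This is precisely the problem the qubitization lemma of Ref.~\cite{Low.19} solves: one adjoins a single extra ancilla qubit and builds the iterate from one query each to controlled-$\hat U$ and controlled-$\hat U^\dagger$ (a controlled choice between $\hat U$ and $\hat U^\dagger$ conjugated by Hadamards on the new ancilla, followed by the reflection), and it is this modified ``qubiterate'' that provably decomposes into $SU(2)$ rotations by $\pm\arccos(\lambda_k/\alpha)$. The need for that construction is already visible in the statement you are proving, which counts queries to controlled-$\hat U$ \emph{and its inverse} rather than to $\hat U$ alone. With the naive walk operator replaced by the qubiterate, the remainder of your argument and your resource count go through.
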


The total circuit depth and space complexity are subject to the cost of querying $\hat{G}$ and $\hat{U}$. Theorem.~\ref{th:hs} in the main text follows directly from Lemma.~\ref{lm:hslcu}.

\noindent\textit{Proof of Theorem.~\ref{th:hs}. }
Let  $\hat{G}$ be a unitary satisfying $\hat{G} |0\rangle^{\otimes \lceil\log_2P \rceil}=\sum_{p=0}^{P -1}\alpha_p|p\rangle$. It can be verified that
\begin{align}
\left(\langle 0|^{\otimes \lceil\log_2P \rceil}{\hat G }^\dag\otimes \hat{\mathbb{I}}_{n}\right)\text{select}(\hat{V})\left(\hat{G} |0\rangle^{\otimes \lceil\log_2P \rceil}\otimes \hat{\mathbb{I}}_{n}\right)=\hat{H}/\alpha.
\end{align}
According to Theorem~\ref{th:ab}, implementing $\hat{G}$ requires circuit depth $O(\log P)$ and $O(P)$ ancillary qubits. Because $\hat G$ is a unitary, controlled-$\hat G$ and its inverse have the sample circuit depth and ancillary qubit complexites. According to Lemma.~\ref{lm:1}, implementing select$(\hat V)$ requires circuit depth $O(\log(nP))$ and $O(nP)$ ancillary qubits. Combining these results with Lemma.~\ref{lm:hslcu}, Theorem.~\ref{th:hs} holds true.

\hspace*{\fill}$\Box$\medskip

\section{Parallel Hamiltonian simulation based on quantum walk}

In Ref.~\cite{Zhicheng.21}, the authors developed a parallel algorithm to improve the circuit depth to doubly logarithmic dependence on the precision $\varepsilon$. Their protocol is based on the access of two oracles
\begin{align}
&\mathcal{O}^b_\text{H}|i\rangle|j\rangle|z\rangle=|i\rangle|j\rangle|z\oplus H_{i,j}\rangle\\
&\mathcal{O}_\text{L}|i\rangle|k\rangle=|i\rangle|L(i,k)\rangle
\end{align}
where $i$ and $j$ are the row and column indexes of $\hat H$, $b$ is the precision of $H$, $|z\rangle$ is a $b$-bit basis, $\oplus$ is the bit-wise XOR, and $L(i,k)$ represents the column index of the $k$th nonzero element in row $i$. As an example, we focus on Hamiltonians in the form of Eq.~\eqref{eq:Hpauli} in the main text with $\hat V_l(p)$ restricted to be Pauli operators (although a more general model has been discussed in~\cite{Zhicheng.21}). In this case, $\mathcal{O}^b_\text{H}$ and $\mathcal{O}_\text{L}$ can be realized with circuit depth $O(nb)$ and $O(n)$ respectively.

The goal is to simulate the evolution $e^{-i\hat Ht}$ to precision $\varepsilon$. According to Ref.~\cite{Zhicheng.21}, the simulation can be implemented with $O(\tau\log(\gamma))$ depth of query to $\mathcal{O}_{\text{H}}^b$ and $\mathcal{O}_{\text{L}}$, $O(1)$ depth query to the quantum state preparation unitary for $O(\gamma)$ dimensional states, and $O\big(\tau\log^2\gamma\log^2n\big)$
extra  depth of single- and two-qubit gates. We have assumed $P=O(\text{poly}(n))$, and defined $\tau\equiv Pt$ and $\gamma \equiv\log(\tau/\varepsilon)$. Ref.~\cite{Zhicheng.21} uses a quantum state preparation method with circuit depth $O(\log^3\gamma)$, achieving the total circuit depth
$O\left(\tau\left(\log^2\gamma\log^2n+\log^3\gamma\right)\right)$. 
On the other hand, with our optimal protocol, the circuit depth for state preparation can be improved to $O(\log\gamma)$ without increasing the space complexity. So the total circuit depth is reduced to
$O\left(\tau\log^2\gamma\log^2n\right)$.

\section{Solving linear system}\label{sec:sls}

Ref.~\cite{Childs.17} developed an algorithm for solving linear system based on Hamiltonian simulation and state preparation, whose circuit depth is polylogarithmic in $1/\varepsilon$. The result can be stated as

\begin{lemma}[Theorem. 3 in~\cite{Childs.17}]\label{lm:ls_1}
 Let $|x\rangle$ be a quantum state proportional to $H^{-1}|b\rangle$. A quantum state $|\tilde x\rangle$ satisfying $\||\tilde x\rangle-|x\rangle\|\leqslant \varepsilon$ can be output with $O\left(\kappa\sqrt{\log(\kappa/\varepsilon)}\right)$ uses of Hamiltonian simulation algorithm that approximates $e^{-i\hat Ht}$ for $t=O(\kappa\log(\kappa/\varepsilon))$ with precision $O\left(\varepsilon/\left(\kappa\sqrt{\log(\kappa/\varepsilon)}\right)\right)$, and $O\left(\kappa\sqrt{\log(\kappa/\varepsilon)}\right)$ queries of state preparation oracle $\hat B$ satisfying $\hat B|0\rangle^{\otimes n}=|b\rangle$.
\end{lemma}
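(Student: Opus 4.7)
The plan is to approximate $1/x$ on the admissible spectrum of $\hat H$ by a linear combination of complex exponentials $e^{-ixt}$, implement the resulting operator on $|b\rangle$ by the linear combination of unitaries (LCU) construction --- using $\hat B$ to load $|b\rangle$ and one Hamiltonian-simulation call per branch of the LCU selector --- and then use fixed-point amplitude amplification to boost the postselection amplitude to $\Theta(1)$.

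First I would rescale so $\hat H$ has spectrum in $[-1,-1/\kappa]\cup[1/\kappa,1]$, and construct a Fourier-type approximation
\begin{equation}
\tilde f(x) \;=\; \sum_{j=-J}^{J}\sum_{k=0}^{K-1}\alpha_{j,k}\,e^{-ix\,t_{j,k}}
\end{equation}
of $f(x)=1/x$. The starting point is the Gaussian-smoothed integral identity $\tfrac{1}{x}=\tfrac{i}{\sqrt{2\pi}}\int_0^\infty\!\!\int_{-\infty}^\infty y\,e^{-y^2/2}\,e^{-ixyz}\,dy\,dz$. Truncating $y\in[-Y,Y]$ with $Y=O(\sqrt{\log(\kappa/\varepsilon)})$ and $z\in[0,Z]$ with $Z=O(\kappa\log(\kappa/\varepsilon))$, and then discretizing with mesh widths balanced to make the Riemann-sum error at most $\varepsilon/(2\kappa)$, yields a sum with $\max|t_{j,k}|=YZ=O(\kappa\log(\kappa/\varepsilon))$ and uniform error $|\tilde f(x)-1/x|\leqslant\varepsilon/\kappa$ on the admissible domain. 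The Gaussian tails in $y$ give fast decay in $k$, which is what ultimately lets the $\sqrt{\log(\kappa/\varepsilon)}$ factor (rather than $\log(\kappa/\varepsilon)$) appear in the final count.

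Next I would assemble $\tilde f(\hat H)|b\rangle$ by LCU. With a $\lceil\log_2(2JK)\rceil$-qubit ancilla, let $\mathrm{PREP}$ prepare $\sum_{j,k}\sqrt{|\alpha_{j,k}|/\|\alpha\|_1}\,e^{i\arg\alpha_{j,k}}|(j,k)\rangle_{\mathrm{anc}}$ and let $\mathrm{SEL}=\sum_{j,k}|(j,k)\rangle\langle(j,k)|\otimes e^{-i\hat H t_{j,k}}$, each branch a single controlled Hamiltonian-simulation call of time at most $T=O(\kappa\log(\kappa/\varepsilon))$. Applying $(\mathrm{PREP}^\dagger\otimes I)\cdot\mathrm{SEL}\cdot(\mathrm{PREP}\otimes \hat B)$ to $|0\rangle_{\mathrm{anc}}|0\rangle^{\otimes n}$ and reading off the $|0\rangle_{\mathrm{anc}}$ component, a textbook LCU calculation gives the subnormalized state $\tilde f(\hat H)|b\rangle/\|\alpha\|_1$, which differs from the ideal $\hat H^{-1}|b\rangle/\|\alpha\|_1$ in norm by at most $\varepsilon/(\kappa\|\alpha\|_1)$.

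The main obstacle --- and the source of both the $\kappa$ and the $\sqrt{\log(\kappa/\varepsilon)}$ factors in the query count --- is the amplitude amplification together with its error analysis. Since $\|\hat H^{-1}|b\rangle\|\in[1,\kappa]$ by the condition-number hypothesis, the $|0\rangle_{\mathrm{anc}}$ amplitude is at least $1/\|\alpha\|_1$ in the worst case. Fixed-point amplitude amplification then brings the success amplitude to $\Theta(1)$ in $O(\|\alpha\|_1/\|\hat H^{-1}|b\rangle\|)=O(\kappa\sqrt{\log(\kappa/\varepsilon)})$ applications of the LCU circuit and its inverse, each containing a single $\hat B$ query and a single Hamiltonian-simulation layer; this reproduces the stated query counts. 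It remains to propagate the various errors: the Fourier truncation contributes $O(\varepsilon/\kappa)$ to the subnormalized output, which becomes $O(\varepsilon)$ after renormalization; and because amplitude amplification is linearly robust under spectral-norm perturbations of the amplified unitary, a per-call Hamiltonian-simulation precision of $O(\varepsilon/(\kappa\sqrt{\log(\kappa/\varepsilon)}))$ --- exactly the figure in the statement --- ensures that the accumulated perturbation after $O(\kappa\sqrt{\log(\kappa/\varepsilon)})$ controlled Ham-sim calls remains below $\varepsilon$, so $\||\tilde x\rangle-|x\rangle\|\leqslant \varepsilon$.
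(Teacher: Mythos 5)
The paper does not actually prove this lemma: it is imported verbatim as Theorem~3 of \cite{Childs.17} and used as a black box, so the only ``proof'' in the paper is the citation; your proposal is essentially a reconstruction of the argument in that reference, following the same Fourier-integral LCU route (Gaussian-smoothed integral representation of $1/x$, truncation and discretization into a linear combination of $e^{-i\hat Ht_{j,k}}$, then amplitude amplification with $\|\alpha\|_1=O(\kappa\sqrt{\log(\kappa/\varepsilon)})$), and the skeleton is sound. One arithmetic slip worth fixing: with your stated truncations $Y=O(\sqrt{\log(\kappa/\varepsilon)})$ and $Z=O(\kappa\log(\kappa/\varepsilon))$ the maximal evolution time would be $YZ=O(\kappa\log^{3/2}(\kappa/\varepsilon))$ rather than the claimed $O(\kappa\log(\kappa/\varepsilon))$; the correct truncation of the outer integral is $Z=O(\kappa\sqrt{\log(\kappa/\varepsilon)})$ (the tail there is controlled by $e^{-x^2Z^2/2}$ with $|x|\geqslant 1/\kappa$), which simultaneously restores $\max|t_{j,k}|=O(\kappa\log(\kappa/\varepsilon))$ and is what actually yields $\|\alpha\|_1=O(Z)=O(\kappa\sqrt{\log(\kappa/\varepsilon)})$ and hence the stated query counts.
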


We focus on the Hamiltonian in the form of Eq.~\eqref{eq:Hpauli} in the main text. Based on Lemma.~\ref{lm:ls_1}, we have the following result.
\begin{lemma}\label{lm:ls_2}
Let $\hat{H}$ be a Hermitian expressed as
\begin{align}
\hat{H}=\sum_{p=0}^{P-1}\alpha_p\hat{V}(p)
\end{align}
with $\alpha_p> 0$, $\hat V(p)=\bigotimes_{l=1}^{n}\hat V_l(p)$, $\hat V_l(p)\in\text{SU}(2)$, and condition number $\kappa$.
Let $|b\rangle$ be $d$-sparse quantum state. With only single- and two-qubit gates, the quantum state $|x\rangle$ proportional to $\hat H^{-1}|b\rangle$ can be approximately prepared to precision $\varepsilon$ using $\tilde O(\log(nP)\alpha\kappa^2+\log(nd)\kappa)$ circuit depth and $O(n(P+d\log d))$ qubits, where $\tilde O$ neglects the logarithmic dependence on $\kappa,1/\varepsilon$.
\end{lemma}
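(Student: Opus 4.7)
The plan is to invoke the Childs--Kothari--Somma solver (Lemma~\ref{lm:ls_1}), instantiating its two oracles with the fast primitives developed earlier in the paper: Theorem~\ref{th:sp} for the preparation of $|b\rangle$ and Theorem~\ref{th:hs} for the Hamiltonian evolution $e^{-i\hat H t}$. First I would read off from Lemma~\ref{lm:ls_1} the required parameters, namely that the solver makes $O(\kappa\sqrt{\log(\kappa/\varepsilon)}) = \tilde O(\kappa)$ sequential calls to each oracle, asks the Hamiltonian simulation to act for time $t = O(\kappa\log(\kappa/\varepsilon))$, and tolerates per-call simulation error $\varepsilon' = O(\varepsilon/(\kappa\sqrt{\log(\kappa/\varepsilon)}))$.

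Next I would cost out the two oracles separately and then add. Since $|b\rangle$ is $d$-sparse, Theorem~\ref{th:sp} realises $\hat B$ in depth $O(\log(nd))$ with $O(nd\log d)$ ancillary qubits, so the $\tilde O(\kappa)$ sequential invocations of $\hat B$ contribute $\tilde O(\kappa\log(nd))$ to the total depth. Plugging $t$ and $\varepsilon'$ into Theorem~\ref{th:hs} gives per-call simulation depth $O(\log(nP)(\alpha t + \log(1/\varepsilon')))$, where $\alpha t = O(\alpha\kappa\log(\kappa/\varepsilon))$ and the $\log(1/\varepsilon')$ term is logarithmic in $\kappa/\varepsilon$ and therefore absorbed into $\tilde O$; each simulation uses $O(nP)$ qubits. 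Summed over the $\tilde O(\kappa)$ sequential calls, the simulation contribution is $\tilde O(\alpha\kappa^{2}\log(nP))$. Adding the two contributions reproduces the stated depth bound $\tilde O(\log(nP)\alpha\kappa^{2}+\log(nd)\kappa)$, and since the ancillary registers of both subroutines can be freshly initialised and uncomputed between calls, the peak space usage is the maximum of $O(nP)$ and $O(nd\log d)$, i.e.\ $O(n(P+d\log d))$.

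The main obstacle will be the bookkeeping of the approximation errors across the composed subroutines. Concretely, I must verify that with the per-call precision $\varepsilon'$ demanded by Lemma~\ref{lm:ls_1}, the errors of all $\tilde O(\kappa)$ Hamiltonian simulations, combined with any residual error in $\hat B$, accumulate to at most $O(\varepsilon)$ on the final output state, so that $\||\tilde x\rangle-|x\rangle\|\leqslant\varepsilon$ is actually attained; the factor $\kappa\sqrt{\log(\kappa/\varepsilon)}$ in the denominator of $\varepsilon'$ exists precisely to absorb this accumulation, but checking it explicitly (including the subnormalisation introduced when the block-encoded operator is the Hamiltonian rather than $\hat H/\alpha$) is the only non-mechanical step. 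A secondary technical point is to confirm that each subroutine returns its ancillae cleanly to $|0\rangle$ at the end of every call, which is what allows the quoted space bound $O(n(P+d\log d))$ to stand without picking up an extra $\kappa$ factor from reserving fresh workspace per invocation.
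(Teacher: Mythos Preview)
Your proposal is correct and mirrors the paper's own proof essentially step for step: invoke Lemma~\ref{lm:ls_1}, plug in Theorem~\ref{th:hs} for each Hamiltonian-simulation call (with $t=O(\kappa\log(\kappa/\varepsilon))$ and precision $\varepsilon'=O(\varepsilon/(\kappa\sqrt{\log(\kappa/\varepsilon)}))$) and Theorem~\ref{th:sp} for each state-preparation call, then multiply by the $\tilde O(\kappa)$ sequential invocations and add. The error-accumulation and ancilla-reuse concerns you flag are not addressed explicitly in the paper either; they are absorbed into the guarantees of Lemma~\ref{lm:ls_1} and the unitarity of the subroutines, so no extra work is needed beyond what you outlined.
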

\begin{proof}
 According to Theorem.~\ref{th:hs}, each unitary $e^{-iHt}$ for $t=O(\kappa\log(\kappa/\varepsilon))$ can be simulated with precision $O\left(\varepsilon/\left(\kappa\sqrt{\log(\kappa/\varepsilon)}\right)\right)$ with $O(\log (nP)(\alpha \kappa\log(\kappa/\varepsilon)+\log((\kappa\sqrt{\log(\kappa/\varepsilon)})/\varepsilon)))$ circuit depth and $O(nP)$ ancillary qubits. So evolution unitary in Lemma.~\ref{lm:ls_1} accounts for totally $\tilde O(\log (nP)\alpha\kappa^2)$ circuit depth, and $O(nP)$ ancillary qubits. According to Theorem.~\ref{th:sp}, for $d$-sparse $|b\rangle$, each query of the state preparation oracle $\hat B$ has circuit depth $O(\log(nd))$ and $O(nd\log d)$ ancillary qubits. So the state preparation in Lemma.~\ref{lm:ls_1} accounts for totally $\tilde O(\log(nd)\kappa)$ circuit depth and $O(nd\log d)$ ancillary qubits. Combining the costs for Hamiltonian simulation and state preparation, Lemma.~\ref{lm:ls_2} can be obtained.
\end{proof}

Theorem.~\ref{th:ls} in the main text is achieved by setting $d=O(\text{poly}(n))$ for Lemma.~\ref{lm:ls_2}. Moreover, by setting $P,\alpha,d=O(1)$, the circuit depth and qubit number reduce to $\tilde O(\log(n)\text{poly}(\kappa))$ and $O(n)$ respectively.

The comparison of our parallel method to other protocols for solving linear systems are provided in Table.~\ref{tab:sls}. Both quantum sequential and quantum parallel methods use the algorithm in~\cite{Childs.17} with the qubitization technique for Hamiltonian simulation~\cite{Low.19}, and the sparse state preparation method introduced in Theorem.~\ref{th:sp}. ``Quantum sequential'' corresponds to the method in~\cite{Berry.15,Childs.18} for select($\hat V$); ``quantum parallel'' corresponds to the method in Lemma.~\ref{lm:1} for select($\hat V$).
\begin{table}[h]\centering
\caption{Space and time complexities for SLS with $O(\text{poly}(n))$-sparse $|b\rangle$. ``$O(1)$ sparse'' corresponds to $P,\alpha,d=O(1)$. We have defined $\kappa_F\equiv\|\hat H\|_F/\|\hat H\|$ with $\|\cdot\|_F$ the Frobenius norm. \label{tab:sls}}
\begin{ruledtabular}
\begin{tabular}{ccccccc}
Algorithm &Time  &Space &Time ($O(1)$ sparse) &Space ($O(1)$ sparse)\\
\hline
Quantum-inspired~\cite{Chia.20,Gilyen.20}&$\tilde O(\text{poly}(n,\kappa,\kappa_F))$&$O(nNP)$&$\tilde O(\text{poly}(n,\kappa,\kappa_F))$&$O(nN)$ \\
Quantum sequential&$\tilde O( nP\alpha\,\text{poly}(\kappa))$& $O(\log P+\text{poly}(n))$&$\tilde O(n\,\text{poly}(\kappa))$&$O(n)$\\
Quantum parallel&$\tilde O(\log(nP)\alpha\,\text{poly}(\kappa))$&$O(P\,\text{poly}(n))$&$\tilde O(\log(n)\text{poly}(\kappa))$&$O(n)$
\end{tabular}
\end{ruledtabular}
\end{table}

\section{Generalized QRAM for continuous amplitudes}\label{sec:pum_sig_q}
Originally, QRAM is designed for storing the classical binary data $\mathcal{D}=\left\{D_k\right\}_{k=0}^{2^n-1}$ with $D_k\in\{0,1\}$. Querying QRAM performs Eq.~\eqref{eq:qram_0} in the main text:
\begin{align}\label{eq:qram_0}
\text{QRAM}(\mathcal{D})\sum_{k=0}^{2^n-1}\psi_k|k\rangle|0\rangle=\sum_{k=0}^{2^n-1}\psi_k|k\rangle|D_k\rangle.
\end{align}
As noted in~\cite{Giovannetti.08_2}, QRAM can not be directly generalized to storing unknown quantum states with $|D_k\rangle\in\mathbb{C}^2$, due to the non-cloning theorem.
However, there are classical data with continuous values of practical interest, so the generalization of QRAM to continuous amplitude is still important. In other words, we want generalize the transformation in Eq.~\eqref{eq:qram_0}, with known, classical description of $|D_k\rangle\in\mathbb{C}^2$. 

With our PUM in Lemma.~\ref{lm:1}, the QRAM for continuous amplitudes can be realized straightforwardly. By setting $n=1$, $|z\rangle=|0\rangle$ and
\begin{align}
\hat U(k)=&|0\rangle\langle D_k|+|1\rangle\langle D^{\bot}_k|
\end{align}
with $\langle D_k|D_k^{\bot}\rangle=0$, Eq.~\eqref{eq:pum_sig} is equivalent to Eq.~\eqref{eq:qram_0}  with $d=N$. According to Lemma.~\ref{lm:1}, the method has $O(n)$ circuit depth and $O(N)$ ancillary qubits. 

In our scheme, each qubit connects to no more than constant number of other qubits and has a tree-like structure (Fig.~\ref{fig:qram}). This connectivity is similar to the conventional QRAM protocols~\cite{Giovannetti.08,Giovannetti.08_2,Hann.19,Hann.21} for binary data. We notice that Refs.~\cite{Yuan.22,Clader.22} independently proposed two schemes about QRAMs for continuous data, but they have assumed all-to-all connectivity. 

\section{Surface code implementation of general quantum state preparation}

\subsection{Clifford+$T$ decomposition}
Before discussing the physical implementation, we introduce how our protocol can be approximated with Clifford+$T$ gates, which is the elementary gates for fault-tolerant surface code computation.  According to Solovay-Kitaev theorem, approximating a single qubit gate with logarithmic circuit depth with respect to accuracy is possible, but how to optimize the total circuit depth of a multi-qubit operation is more complicated. In this section, we show how to perform the general state preparation with $O(n\log(n/\varepsilon))$ circuit depth using only the Clifford+$T$ gates, where $\varepsilon$ is the accuracy of output quantum state measured by vector $2$-norm. We believe that our circuit depth can not be significantly improved further. 

In Alg.~\ref{alg:ab}, except for $S(\theta_{l,j}, H_{l-1,j},H_{l,2j+1})$ (line $3$) and $\text{Ph}(\arg(a_j), H_{n,j})$ (line $6$), all other steps can be ideally implemented by Clifford+$T$ gates. So the fidelity of stage 1 is also the fidelity for the total preparation fidelity. Below, we show how to approximate $S(\theta_{l,j}, H_{l-1,j},H_{l,2j+1})$ and $\text{Ph}(\arg(a_j), H_{n,j})$ with high accuracy using Clifford+$T$ gates, and bound the fidelity of the output state at stage 1.

We begin with the first $n$ steps (line $1$ to $5$). In general, a partial swap gate $S(\theta, a,b)$ defined in Eq.~\eqref{eq:S} can be decomposed as follows.

\begin{center}
\begin{quantikz}[row sep=1cm]
 \lstick{$a$}& \gate[wires=2]{S(\theta,a,b)}&\qw \\
\lstick{$b$}& \qw &\qw
\end{quantikz}=
\begin{quantikz}
\lstick{$a$}&\qw&\ctrl{1}&\qw&\ctrl{1}&\targ{}&\qw\\
\lstick{$b$}&\gate{R_y(\theta)}&\targ{}&\gate{R_y(-\theta)}&\targ{}&\ctrl{-1}&\qw
\end{quantikz}
\end{center}
Note that in our protocol, the input state of $b$ is always $|0\rangle$. Here, $R_y(\theta)=\begin{pmatrix}\cos\theta/2&-\sin\theta/2\\\sin\theta/2&\cos\theta/2\end{pmatrix}$ is the rotation along $y$ axis for angle $\theta$. With elementary gate set $\{\text{Hard},T\}$ where Hard the Hadamard gate, one can approximate arbitrary single qubit gate to accuracy $\varepsilon'$ using circuit depth $O(\log1/\varepsilon')$~\cite{Selinger.12}. More rigorously, there exist $\widetilde R_y(\theta)=\prod_{i=1}^{D}C_i$ with $C_i\in\{\text{Hard},T\}$, such that $D=O(\log1/\varepsilon')$ and $\|\widetilde R_y(\theta)- R_y(\theta)\|=O(\varepsilon')$. Here, $\|\cdot\|$ represents the operator 2-norm for matrices, or vector 2-norm for vectors.
Because $\text{Hard}^\dag=\text{Hard}$ and $T^\dag= \prod_{i=1}^7T$, its inverse $\widetilde R^\dag_y(\theta)$ satisfying
\begin{align}
\widetilde R^\dag_y(\theta)\widetilde R_y(\theta)=\mathbb{I}_1\label{eq:rr}
\end{align}
can also be constructed with $O(\log1/\varepsilon')$ depth by reversing the order of the Hardamard+$T$ gate sequence, and replacing $T$ by $\prod_{i=1}^7T$. We can therefore approximate $S(\theta,a,b)$ with the following circuit
 \begin{center}
\begin{quantikz}[row sep=1cm]
 \lstick{$a$}& \gate[wires=2]{\widetilde S(\theta,a,b)}&\qw \\
\lstick{$b$}& \qw &\qw
\end{quantikz}=\begin{quantikz}
\lstick{$a$}&\qw&\ctrl{1}&\qw&\ctrl{1}&\targ{}&\qw\\
\lstick{$b$}&\gate{\widetilde R_y(\theta)}&\targ{}&\gate{{\widetilde R}^\dag_y(\theta)}&\targ{}&\ctrl{-1}&\qw
\end{quantikz}
\end{center}
which also has circuit depth $O(\log(1/\varepsilon'))$. $\widetilde S(\theta,a,b)$ satisfies
\begin{subequations}\label{eq:s0010}
\begin{align}
(\widetilde S(\theta,a,b)-S(\theta,a,b))|00\rangle&=0\label{eq:s00}\\
(\widetilde S(\theta,a,b)-S(\theta,a,b))|10\rangle&=\alpha|10\rangle+\beta|01\rangle \label{eq:s10}
\end{align}
\end{subequations}
for some $|\alpha|,|\beta|\leqslant \varepsilon'$. In other words, the output state with input $|00\rangle$ is ideal, and the output with input $|10\rangle$ is still in the subspace with single activated qubit. These properties are key to the low depth Clifford+$T$ decomposition. We simply denote the input state of Alg.~\ref{alg:ab} (all qubits at state $|0\rangle$ except for $H_{0,0}$ at state $|1\rangle$) as $|\text{ini}\rangle$. We also denote $W_l$ and $\widetilde W_l$ as the ideal and approximated unitaries at the $l$th step (line $2$ to $4$ with $l$). To facilitate the discussion, we further define
\begin{align}
|\widetilde\Psi_{l}\rangle&=\prod_{l'=l+1}^{n}\widetilde W_{l'}\prod_{l'=1}^{l}W_{l'}|\text{ini}\rangle.
\end{align}
$|\widetilde\Psi_{n}\rangle$ corresponds to the ideal output state and $|\widetilde\Psi_{0}\rangle$ corresponds to the approximated output state with all $S(\theta,a,b)$ replaced by $\widetilde S(\theta,a,b)$. Our goal is to bound $\left\||\widetilde\Psi_{0}\rangle-|\widetilde\Psi_{n}\rangle\right\|$, which corresponds to the error from step $1$ to $n$. According to triangular inequality, we have
\begin{align}
\left\||\widetilde\Psi_{0}\rangle-|\widetilde\Psi_{n}\rangle\right\|\leqslant\sum_{l=0}^{n-1}\left\||\widetilde\Psi_{l}\rangle-|\widetilde\Psi_{l+1}\rangle\right\|.
\end{align}
The next task is to bound $\left\||\widetilde\Psi_{l}\rangle-|\widetilde\Psi_{l+1}\rangle\right\|$. We notice that
\begin{align}
\left\||\widetilde\Psi_{l}\rangle-|\widetilde\Psi_{l+1}\rangle\right\|&=\left\|\widetilde W_{l+1}\prod_{l'=1}^{l}W_{l'}|\text{ini}\rangle-W_{l+1}\prod_{l'=1}^{l}W_{l'}|\text{ini}\rangle\right\|\label{eq:ww1}
\end{align}
where 
\begin{align}
W_{l+1}\prod_{l'=1}^{l}W_{l'}|\text{ini}\rangle=\sum_{k=0}^{2^{l+1}-1}b_{l+1,k}|1\rangle_{H_{0}}\bigotimes_{l'=1}^{l+1}|(k,l')\rangle'_{H_{l'}}.
\end{align} 
According to Eq.~\eqref{eq:s0010}, it can be verified that 
\begin{align}
\widetilde W_{l+1}\prod_{l'=1}^{l}W_{l'}|\text{ini}\rangle=\sum_{k=0}^{2^{l+1}-1}\widetilde b_{l+1,k}|1\rangle_{H_{0}}\bigotimes_{l'=1}^{l+1}|(k,l')\rangle'_{H_{l'}}
\end{align} 
for some $|\widetilde b_{l+1,k}-b_{l+1,k}|\leqslant\varepsilon'|b_{i,k/2}|$ or $|\widetilde b_{l+1,k}-b_{l+1,k}|\leqslant\varepsilon'|b_{i,(k-1)/2}|$, depending of whether $k$ is even or odd. Eq.~\eqref{eq:ww1} can therefore be further bounded by
\begin{align}
\left\||\widetilde\Psi_{l}\rangle-|\widetilde\Psi_{l+1}\rangle\right\|= \sqrt{\sum_{k=0}^{2^{l+1}-1}|\widetilde b_{l+1,k}-b_{l+1,k}|^2}\leqslant\varepsilon'.
\end{align} 
where we have used $\sum_{k=0}^{2^l-1}|b_{l,k}|^2=1$. Therefore, we have
\begin{align}
\left\||\widetilde\Psi_{0}\rangle-|\widetilde\Psi_{n}\rangle\right\|\leqslant n\varepsilon'.\label{eq:psi0n}
\end{align}
To bound the error of Eq.~\eqref{eq:psi0n} to a constant value $\varepsilon/2$, we only need to take $\varepsilon'=\varepsilon/(2n)$. The corresponding circuit depth for each y-rotation is therefore $O(\log(1/\varepsilon'))=O(\log(n/\varepsilon))$, and the total circuit depth for the first $n$ steps is $O(n\log (n/\varepsilon))$.

The next task is to decompose the phase gate Ph$(\text{arg}(a_j),H_{n,j})$ and bound the error at line $6$ of Alg.~\ref{alg:ab}. One may directly decompose the phase gate with gate set $\{\text{Hard}, T\}$.  But the total error of this naive method could be large.  
Instead, for each node $H_{n,j}$, we introduce an extra ancillary qubit $A_{n,j}$ initialized as $|1\rangle$, and implement the following circuit
 \begin{center}
\begin{quantikz}
\lstick{$H_{n,j}$}&\qw&\ctrl{1}&\qw&\ctrl{1}&\qw\\
\lstick{$A_{n,j}$}&\gate{\widetilde {\text{Ph}}(\text{arg}(a_{j})/2)}&\targ{}&\gate{\widetilde{\text{Ph}}^\dag(\text{arg}(a_{j})/2)}&\targ{}&\qw
\end{quantikz}
\end{center}
where $\widetilde{\text{Ph}}(\theta)$ is the $\{\text{Hard},T\}$ decomposition of $\text{Ph}(\theta)=\begin{pmatrix}e^{-i\theta}&\\&e^{i\theta}\end{pmatrix}$ to accuracy $\varepsilon/2$. Similar to previous steps, both $\widetilde{\text{Ph}}(\text{arg}(a_{j})/2)$ and $\widetilde{\text{Ph}}^\dag(\text{arg}(a_{j})/2)$ have circuit depth $O(\log1/\varepsilon)$ and $\widetilde{\text{Ph}}^\dag(\text{arg}(a_{j})/2)\widetilde{\text{Ph}}(\text{arg}(a_{j})/2)=\mathbb{I}_1$ (global phase is neglected).   $A_{n,j}$ are traced out after this step. If $H_{n,j}$ is at state $|0\rangle$, the above circuit output $|0\rangle$ ideally; if $H_{n,j}$ is at state $|1\rangle$, the output error is bounded by $\varepsilon/2$. The input state at this step always has the following form 
\begin{align}
|\widetilde\Psi_{\text{phase in}}\rangle=\sum_{j=0}^{2^n-1}\tilde b_{n,j}|1\rangle_{H_{0}}\bigotimes_{l=1}^{n}|(k,l)\rangle'_{H_{l}}.
\end{align}
For each basis of the above state, there is only one leaf node $H_{n,j}$ at state $|1\rangle$, so there is only two controlled phase gates having contribution to the error. Therefore, the error of this step can achieve $\varepsilon/2$ with circuit depth $O(\log(1/\varepsilon))$.

Combining our approximation technique for $S(\theta_{l,j}, H_{l-1,j},H_{l,2j+1})$ and Ph$(\text{arg}(a_j))$ above, stage $1$ of Alg.~\ref{alg:ab} (line $1$ to $6$) can be approximated to accuracy $\varepsilon$ with $O(n\log (n/\varepsilon)+\log 1/\varepsilon)=O(n\log (n/\varepsilon))$ layer of Clifford+$T$ gates. Because each elementary gate in other stages can be ideally decomposed into Clifford+$T$ gates with constant circuit depth, the total Clifford+$T$ depth for Alg.~\ref{alg:ab} is also $O(n\log (n/\varepsilon))$. 

Compiling the Clifford+$T$ decomposition also requires extra overhead of classical preprocessing. The classical runtime is $O(N\text{polylog}(n/\varepsilon))$ for sequential compiling. By parallel computing with $O(N)$ space complexity, the compiling takes time $O(\text{polylog}(n/\varepsilon))$.

We notice that Ref~\cite{Clader.22} has achieved $T$ depth of $O(n+ \log(1/\varepsilon))$ based on a circuit with all-to-all qubit connectivity, which is better than ours. On the other hand, Ref~\cite{Clader.22} uses FANOUT-CNOT gates. With naive decomposition of the FANOUT-CNOT gate into 2-qubit Clifford gates, the total Clifford+$T$ depth is quadratically higher than the result obtained here. 

\begin{figure}
	\centering
	\includegraphics[width=1\columnwidth]{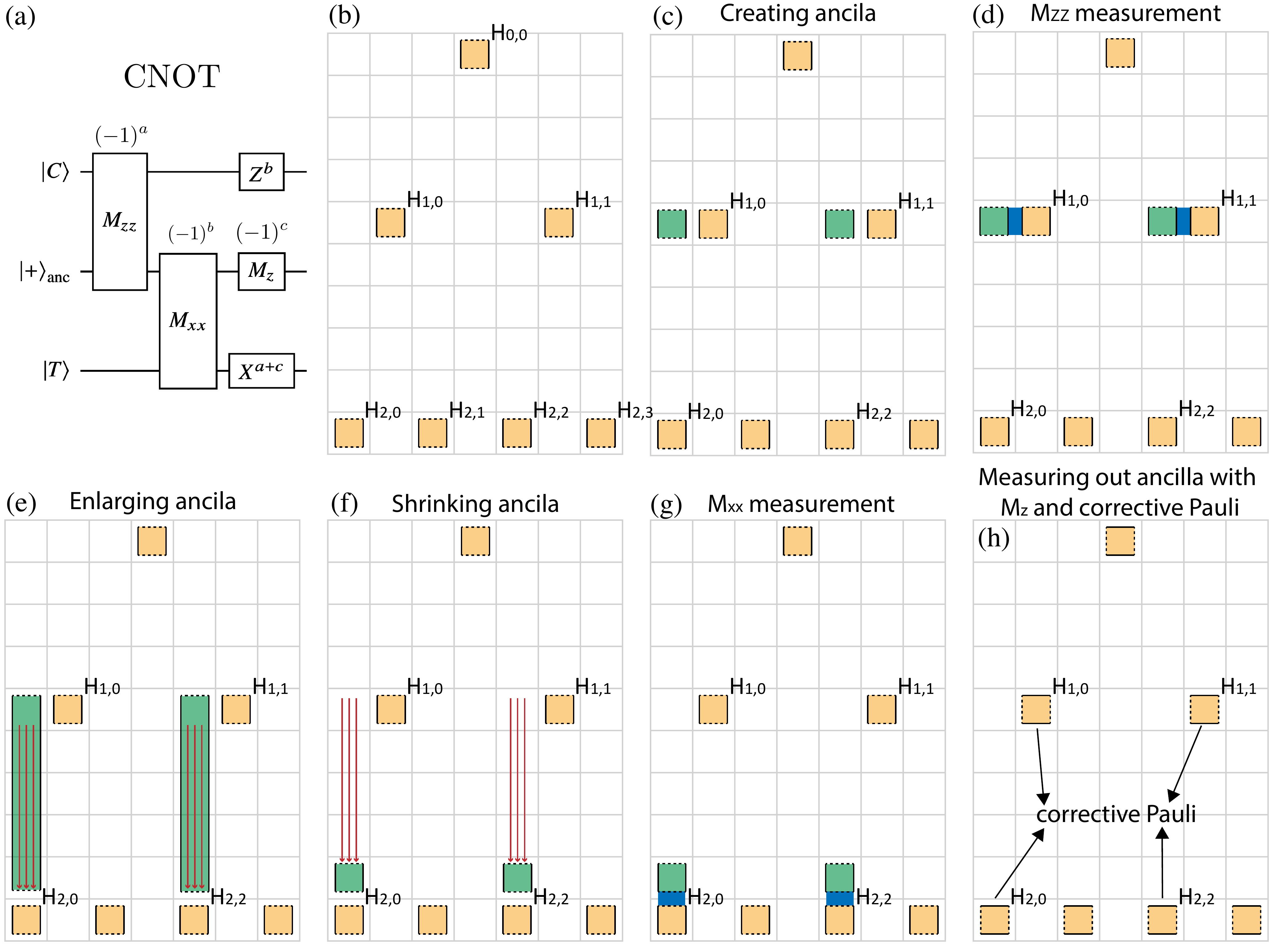}
	\caption{CNOT gate based on lattice surgery, taking  line $2$ of Alg.~\ref{alg:ab} for $l=2$ as an example. Yellow patches represent the logic qubits, and green patches represent the ancillary logici qubits. (a) decomposition of CNOT gate to joint measurements, local measurement and corrective Pauli gates. (b) Layout of the logic qubits at stage 2. (c) Creating ancillary logic qubits. (d) $M_{zz}$ measurement on ancillary qubits and $H_{1,0}$ or $H_{1,1}$. (e) Enlarging ancillary qubits. (f) Shrinking ancillary qubits. (g) $M_{xx}$ measurement on ancillary qubits and $H_{2,0}$ or $H_{2,3}$. (h) Measure out ancillary qubits and corrective Pauli gates on controlled and target qubits of CNOTs.}
	\label{fig:scnot}
\end{figure}

\begin{figure}
	\centering
	\includegraphics[width=1\columnwidth]{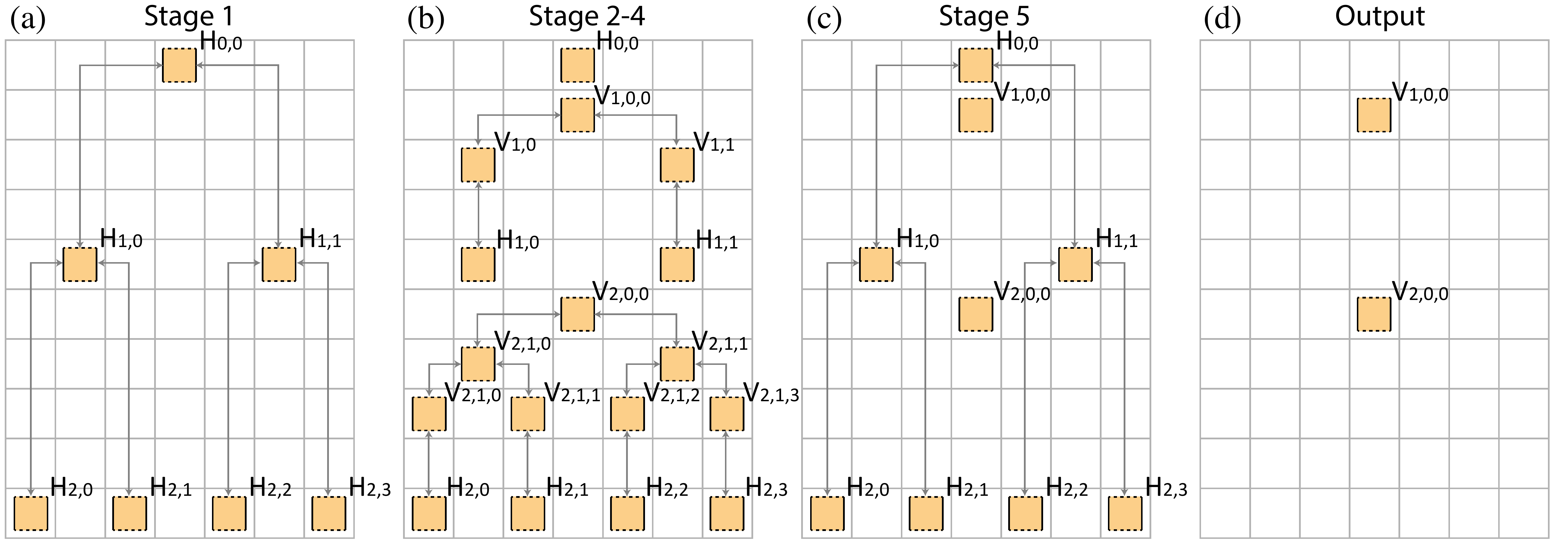}
	\caption{Logic qubit layout for each stage of the general quantum state preparation based on surface code. Yellow patches represent logic qubits, and its smooth and rough boundaries are represented with solid and dashed lines respectively.  An example is given for $n=2$. Grey double-headed arrows represents the paths of ancillary qubits required for CNOT gate implementation. In stage 1, qubits at binary tree $H$ are created. In stage 2, qubits at binary trees $V_{l}$ are created. In stage 5, qubits at binary trees $V_l$, except for root nodes $V_{l,0,0}$, for $1\leqslant l\leqslant n$ are measured and removed. Finally, the target state is encoded in $V_{l,0,0}$ for $1\leqslant l\leqslant n$.}
	\label{fig:surfstg}
\end{figure}

\subsection{Surface code implementation with nearest-neighbor interaction}
Our protocol mainly follows the framework about lattice surgery based surface code computation~\cite{Horsman.12,Litinski.19}. We consider a two-dimensional (physical) qubit array with nearest-neighbor coupling. This qubit array can be realized by, for example, superconducting qubits, trapped ions, and quantum dots. As shown in Fig.~\ref{fig:scnot},~\ref{fig:surfstg}, the array is represented by a board partitioned into many tiles (represented by gray grids). Logic qubits are encoded by surfaces (i.e. a subarray of physical qubits) represented by yellow patches. The ancillary logic qubit for CNOT gate purpose are represented by green patches, which will be introduced later. Note that a surface for each logic qubit can occupy more than one tiles. The solid lines and dashed lines represent the smooth and rough boundaries of the logic qubit respectively. 

Surface code computation are mainly based on $T$ gate and CNOT gate. With distilled magic states, $T$ gates can be realized by performing CNOT gates between the logic qubits and magic states~\cite{Fowler.12}. Below, we focus on how CNOT gates are implemented, and assume that $T$ gates are implemented with standard methods. 

There are in general two methods for CNOT gate implementation, topological braiding~\cite{Fowler.12} and lattice surgery~\cite{Horsman.12}. In this work, we focus on the lattice surgery method, which is based on the circuit shown in Fig.~\ref{fig:scnot}(a). We introduce an ancillary logic qubit initialized at state $|+\rangle$. A joint measurement at $\sigma_z\sigma_z$ basis (denoted as $M_{zz}$) is applied at the controlled and ancillary qubit. Then, another joint measurement at $\sigma_x\sigma_x$ basis (denoted as $M_{zz}$) is applied at the ancillary qubit and target qubit. Next, the ancillary qubit is measured at $Z$ basis (denoted as $M_z$). A Pauli $Z$ gate is applied at the controlled qubit if the measurement outcome of $M_{zz}$ is ``$-1$'', and a Pauli $X$ gate is applied at the target qubit if the measurement outcome of $M_{zz}$ and $M_{z}$ are different.
 
In surface code, all the following operations have constant circuit depth~\cite{Litinski.19}:
\begin{itemize}
\item Creating a qubit initialized at state $|0\rangle$, $|1\rangle$, $|+\rangle$ or $|-\rangle$
\item Single qubit measurement at $\sigma_z$ or $\sigma_x$ basis, and remove the qubit
\item Joint measurement at $\sigma_z\sigma_z$ basis, if two qubits have adjacent smooth boundaries
\item Joint measurement at $\sigma_x\sigma_x$ basis, if two qubits have adjacent rough boundaries
\item Enlarging qubit
\item Shrinking qubit
\end{itemize}
With the elementary operations above, we are ready to show how a CNOT gate in our algorithm is applied with surface code. As an example, we consider line $2$ in Alg.~\ref{alg:ab} with $l=2$, i.e. two parallel CNOT gates with $H_{1,0}$, $H_{1,1}$ as controlled qubits, and $H_{2,0}$, $H_{2,2}$ as target qubits respectively. Each step of the process is shown in Fig.~\ref{fig:scnot} (c)-(h). In the first step [Fig~\ref{fig:scnot} (c)], we create two ancillary qubits initialized as $|+\rangle$ near $H_{1,0}$ and $H_{1,1}$. The smooth boundary of the ancillary qubits and $H_{1,0}$ or $H_{1,0}$ are adjacent to each other. So we are able to apply joint measurements $M_{zz}$ between them [Fig~\ref{fig:scnot} (d)]. 
We then enlarge the ancillary qubits to make their rough boundaries adjacent to the rough boundary of $H_{2,1}$ or $H_{2,3}$ [Fig~\ref{fig:scnot} (e)]. The ancillary qubits are then shrank to the original size [Fig~\ref{fig:scnot} (f)]. In the next step, joint measurements $M_{xx}$ are applied to ancillary qubits and $H_{2,1}$ or $H_{2,3}$ [Fig~\ref{fig:scnot} (g)]. Finally, the ancillary qubits are measured on the $X$ basis and then removed. Corrective Pauli operations are applied on $H_{1,0}$, $H_{1,1}$, $H_{2,0}$ and $H_{2,2}$ accordingly [Fig~\ref{fig:scnot} (h)] according to the rule introduced earlier. 

All operations above has constant circuit depth, so the circuit depth of applying two CNOT gates above is also a constant. It can also be seen from the above example that if one want to apply multiple CNOT gates simultaneously, there should not be crossover between the paths for enlarging and shrinking ancillary qubits. In fact, this can be ensured in all stages of our general state preparation protocol. Fig.~\ref{fig:surfstg} demonstrates all stages of general quantum state preparation. The gray double-headed lines represent the paths of ancillary qubits for CNOT gate implementations. As can be seen, there is no crossover between paths, so there are no extra circuit depth overhead due to the simultaneous implementations of multiple CNOT gates. 

We then illustrate the layout at each stage. In stage $1$, only qubits at binary tree $H$ are created [Fig.~\ref{fig:surfstg}(a)], and the CNOT gates are applied between a qubit and its children. To ensure that there are sufficient space for other binary trees $V_l$, the distances between qubits at horizontal direction are larger when the qubits are closer to the leaf layer. In stage $2$, all qubits at binary tree $V_l$ are created [Fig.~\ref{fig:surfstg}(b)] between the $(l-1)$th layer and the $l$th layer of $H$. Before stage $5$, all qubits at $V_l$ except for the root nodes $V_{l,0,0}$ have been uncomputed, so one can safely measure them out to simplify the layout [Fig.~\ref{fig:surfstg}(e)]. Finally, all qubits at binary tree $H$ are also uncomputed. We measure all qubits at $H$ in $\sigma_z$ basis and remove them. The output state is then encoded in $V_{l,0,0}$.

It can be verified that the height of the entire qubit array is $O(1)+\sum_{i=1}^{n}(O(1)+O(i))=O(n^2)$. The width of the array is $O(N)$. The space complexity is therefore $O(n^2N)$. As we have discussed, parallel CNOT gates will not introduce extra circuit depth overhead to the protocol, even thought the control and target qubits can be spatially far away. So the total circuit depth is $O(n\log(n/\varepsilon))$. 

Finally, we would like to clarify why constant depth of nonlocal CNOT gate is possible with only nearest-neighbor coupling. The key step is enlarging ancillary qubit, that makes the interaction between control and target logic qubits possible. In performing enlarging operations, one should only measure more stabilizers covered by the new patches in the next time step, while each stabilizer can still be measured with constant local operations. The distance between control and target qubits only determines how much stabilizers should be measured (simultaneously), while the circuit depth remains unchanged~\cite{Horsman.12,Litinski.19}.  

\subsection{Surface code implementation with nonlocal interactions}
As has been discussed in the main text, nonlocal entangling gates can be realized in various ways. As an example, Ref.~\cite{Xu.22} has proposed a surface code implementation based on nonlocal CNOT for the purpose of avoiding catastrophe error. In this case, substantial enlarging and shrinking of surfaces can be avoided, and $O(N)$ space complexity is sufficient for implementing Alg.~\ref{alg:ab}.

\section{Approximating general quantum states with sparse states}
As mentioned in the main text, we can speed up the state preparation process by approximating a general quantum state with a sparse quantum state. Let $a^{\text{max}}_j$ be the $j$th largest value of $|a_k|$, and suppose 
\begin{align}
\sum_{j=1}^{d}|a_{j}^{\max}|^2=1-\varepsilon.
\end{align} 
We first define an unnormalized approximation state $|\tilde \psi_{\text{sp}}\rangle=\sum_{k=0}^{2^n-1}c_k|k\rangle$, where 
\begin{eqnarray}
c_{k}= \left\{
\begin{array}{rcl}
a_k     &    &|a_k|\geqslant a_d^{\max} \\
0    &  & |a_k|<a_d^{\max} \\  
\end{array} \right. \ .
\end{eqnarray}
The normalized state of $|\tilde \psi_{\text{sp}}\rangle$ is 
\begin{align}
|\psi_{\text{sp}}\rangle=\frac{1}{\sqrt{\sum_{k=0}^{2^n-1}|c_k|^2}}\sum_{k=0}^{2^n-1}c_k|k\rangle=\frac{1}{\sqrt{\sum_{j=1}^{d}|a^{\max}_j|^2}}\sum_{k=0}^{2^n-1}c_k|k\rangle=\frac{1}{\sqrt{1-\varepsilon}}|\tilde\psi_{\text{sp}}\rangle.
\end{align}
The fidelity between ideal state $|\psi\rangle$ and normalized approximating state $|\psi_{\text{sp}}\rangle$ is 
\begin{align}
F_{\text{sp}}=|\langle\psi|\psi_{\text{sp}}\rangle|^2=\frac{1}{1-\varepsilon}|\langle\psi|\tilde\psi_{\text{sp}}\rangle|^2=\frac{1}{1-\varepsilon}\left(\sum_{k=0}^{2^n-1}|a_kc_k|\right)^2=\frac{1}{1-\varepsilon}\left(\sum_{j=1}^{d}|a^{\max}_j|\right)^2=1-\varepsilon.
\end{align}
According to Theorem.~\ref{th:sp}, $|\psi_{\text{sp}}\rangle$ can be prepared with $O(\log(nd))$ runtime and $O(nd\log d)$ ancillary qubits.

\end{document}